\newcommand{\Lim}[1]{\raisebox{0.5ex}{\scalebox{0.8}{$\displaystyle \lim_{#1}\;$}}}
\renewcommand*{\algorithmcfname}{Protocol}
\newtheorem{theorem}{Theorem}
\newtheorem{proposition}[theorem]{Proposition}
\begin{document}

\title{Enhanced Security and Privacy via Fragmented Federated Learning}

\author{Najeeb~Moharram~Jebreel,           Josep~Domingo-Ferrer,~\IEEEmembership{Fellow,~IEEE,}
Alberto~Blanco-Justicia and
David~S\'anchez
\thanks{The authors are with the UNESCO Chair in Data Privacy, CYBERCAT Center
for Cybersecurity Research of Catalonia, Department of
Computer Engineering and Mathematics, Universitat Rovira i
Virgili, Av. Pa\"{\i}sos Catalans 26, E-43007 Tarragona, Catalonia, e-mail
\{najeeb.jebreel, josep.domingo, alberto.blanco, david.sanchez\}@urv.cat.}}

\markboth{Journal of \LaTeX\ Class Files,~Vol.~14, No.~8, August~2015}%
{Shell \MakeLowercase{\textit{et al.}}: Bare Demo of IEEEtran.cls for Computer Society Journals}


\maketitle

\begin{abstract}
In federated learning (FL), a set of participants share updates computed on their local data with an aggregator server that combines updates into a global model. However, reconciling accuracy with privacy and security is a challenge to FL. On the one hand, good updates sent by honest participants may reveal their private local information, whereas poisoned updates sent by malicious participants may compromise the model's availability and/or integrity. On the other hand, enhancing privacy via update distortion damages accuracy, whereas doing so via update aggregation damages security because it does not allow the server to filter out individual poisoned updates. To tackle the accuracy-privacy-security conflict, we propose {\em fragmented federated learning} (FFL), in which participants randomly exchange and mix fragments of their updates before sending them to the server. 
To achieve privacy, we design a lightweight protocol that allows participants to privately exchange and mix encrypted fragments of their updates so that the server can neither obtain individual updates nor link them to their originators. 
To achieve security, we design a reputation-based defense tailored for FFL that builds trust in participants and their mixed updates based on the quality of the fragments they exchange and the mixed updates they send. Since the exchanged fragments' parameters keep their original coordinates and attackers can be neutralized, the server can correctly reconstruct a global model from the received mixed updates without accuracy loss. Experiments on four real data sets show that FFL can prevent semi-honest servers from mounting privacy attacks, 
can effectively counter poisoning attacks and can  keep the accuracy of the global model.
\end{abstract}

\begin{IEEEkeywords}
Federated learning, Privacy, Security, Accuracy, Fragmentation.
\end{IEEEkeywords}



\section{Introduction}\label{sec:introduction}

\IEEEPARstart{F}{ederated learning} (FL, \cite{mcmahan2017communication}) enables multiple participants to jointly train a machine learning (ML) model without sending their data to a central server. In FL, participants compute model updates based on their local data and a global model received from a coordination server, to whom they send the resulting updates. The server aggregates the received updates to obtain a new global model, which is redistributed among the participants. This collaboration entails mutual benefits for all parties: i) the participants and the server obtain more accurate models due to learning from joint training data, ii) participants keep their private local data in their devices, and iii) the server distributes the computational load of training across the participants' devices ({\em e.g.}, smartphones) \cite{bonawitz2019towards}. Since training data do not leave the participants' devices, FL is a suitable option for 
scenarios dealing with personal data, such as facial recognition~\cite{xu2017end}, voice assistants~\cite{bhowmick2018protection}, healthcare~\cite{brisimi2018federated}, next-word prediction~\cite{hard2018federated} and intrusion detection in IoT
networks~\cite{mothukuri2021federated}, or in case data collection and processing are restricted due to privacy protection laws such as the GDPR~\cite{voigt2017eu}.
Despite these advantages, FL is vulnerable to privacy and security attacks~\cite{kairouz2019advances, mothukuri2021survey}.

Regarding privacy, several works \cite{mcmahan2017learning, nasr2019comprehensive, melis2019exploiting} have demonstrated that a \emph{semi-honest} server can analyze individual updates to infer sensitive information on a participant's local training data. Recent powerful privacy attacks show that
it is possible to reconstruct the original training data by inverting the gradients of updates \cite{zhu2020deep, zhao2020idlg, geiping2020inverting}.

Regarding security, FL is vulnerable to poisoning attacks~\cite{blanco2021achieving}.
Since the server has no control over the behavior of participants, any of them may deviate from the prescribed training protocol to attack the model by conducting either untargeted poisoning ({\itshape i.e.} Byzantine) attacks~\cite{blanchard2017machine, wu2020federated} or targeted poisoning attacks~\cite{biggio2012poisoning, fung2020limitations, bagdasaryan2020backdoor}. 
In the former type of attacks, the attacker aims to degrade the model's overall performance, whereas in the latter, he aims to cause the global model to misclassify some attacker-chosen inputs.

Some solutions have been proposed to prevent the server from analyzing individual updates or linking them to their originators. These involve well-known privacy-enabling methods: differential privacy (DP) \cite{dwork2014algorithmic}, homomorphic encryption (HE) \cite{gentry2009fully} and secure multiparty computation (SMC) \cite{yao1982protocols}. 
DP-based methods protect the participants' data by injecting random noise into the parameters of updates at the cost of sa\-cri\-ficing the accuracy of the global model \cite{domingo2020limits,acmdp}. 
On the other hand, HE- and SMC-based methods securely aggregate the updates of participants before sending them to the server. 
Yet, HE and SMC have a high computational cost 
and they prevent the server from inspecting individual updates, which makes approaches based on these techniques 
vulnerable to security attacks. 
Note that countermeasures against security attacks require direct access by the server 
to the individual updates in order to detect and/or filter out those that are poisoned 
\cite{yin2018byzantine, blanchard2017machine, fung2018mitigating}.

Simultaneously achieving privacy, security and accuracy is a tough challenge for FL \cite{li2020federated, kairouz2019advances, blanco2021achieving}.
In fact, this is one of several challenges in ML due to contradicting requirements; see~\cite{liu2020machine} about other conflicts.

Our goal is to address the following puzzle: \emph{``Can we prevent a semi-honest server from performing privacy attacks on individual updates while learning an accurate global model and ensuring protection against security attacks?''}. 

To this end, we propose \emph{fragmented federated learning} (FFL), a framework in which participants randomly exchange fragments of updates among them before sending them to the server. Our work brings the following contributions:
\begin{itemize}
    \item We propose a novel lightweight protocol that i) allows participants to privately exchange and mix random fragments of their updates, ii) enables the server to correctly aggregate the global model from the mixed updates, and iii) prevents the server from recovering the complete original updates or linking them to their originators.
    \item We propose a new reputation-based defense tailored to FFL against security attacks. 
    Specifically, the server selects participants for training and adaptively aggregates their mixed updates according to their global reputations. 
    Also, honest participants do not exchange fragments with participants with low local reputations.
    Reputations are computed based on the quality of the updates
    the participants send and the fragments they exchange. 
    \item We provide extensive theoretical and empirical analyses to assess the accuracy, privacy and security offered by FFL, and we quantify the computation overhead and communication cost it incurs.
\end{itemize}
The rest of this paper is organized as follows. 
Section~\ref{sec:related} discusses related works.
Section~\ref{sec:prelim} introduces preliminary notions. Section~\ref{sec:attack_model} describes the attacks being considered.
Section~\ref{sec:meth} presents the fragmented federated learning framework. 
Section~\ref{sec:privacy} and Section~\ref{sec:security} provide privacy and security analyses of FFL.
Section~\ref{sec:setup} details the experimental setup and evaluates our approach w.r.t. accuracy, robustness against attacks and runtime.
Convergence and complexity are analyzed in Section~\ref{sec:convergence} and Section~\ref{sec:complexity}.
Section~\ref{sec:conclusion} gathers conclusions and proposes several lines of future research.

\section{Related Work}
\label{sec:related}

Most works in the FL literature tackle either robustness versus attacks against privacy or robustness versus attacks against security (poisoning attacks), but they do not consider both types of robustness simultaneously. 

\textbf{Private FL.}
On the privacy side, several works have been proposed to prevent the server from seeing individual updates. \cite{bonawitz2017practical, so2021turbo} use secret sharing to hide individual updates. \cite{aono2017privacy, hardy2017private, zhang2020batchcrypt} encrypt the local updates and use homomorphic encryption to compute the global model from the encrypted updates. However, both approaches are vulnerable to poisoning attacks because they hinder the analysis of individual updates.
Other works adopt differential privacy (DP) \cite{mcmahan2017learning, bhowmick2018protection, geyer2017differentially}, which adds noise to local updates before sending them to the server. DP is practical but it only offers strong privacy guarantees for small values of $\epsilon$ that, due to the noise they add, significantly hamper the accuracy of the global model~\cite{domingo2020limits, acmdp}. 

\textbf{Secure FL.}
On the security side, several attack-resistant aggregation defenses have been proposed such as Multi-Krum~\cite{blanchard2017machine}, Bulyan~\cite{mhamdi2018hidden}, FoolsGold~\cite{fung2020limitations} and DRACO~\cite{chen2018draco}.
However, they require direct access to the original individual updates, which interferes with protecting 
the privacy of participants.
Although the median and the trimmed mean~\cite{yin2018byzantine} can be applied on mixed updates, they are not suitable for high-dimensional models because their estimation error scales up with the size of the model in a square-root manner~\cite{chang2019cronus}.

\textbf{Private and secure FL.}
Recently, the literature has witnessed a growing interest in achieving FL that is both privacy-preserving and secure.
\cite{naseri2020toward} uses DP to address both privacy and robustness against backdoor attacks. However, it does not deal with the trade-off between privacy and accuracy, and it does not consider poisoning attacks different from backdoor attacks. 
PEFL~\cite{privacy_enhanced9524709} tries to address both privacy and security, but assumes there are two \emph{non-colluding} servers that collaborate to filter out malicious updates while preventing each other from seeing individual updates. 
Moreover, PEFL builds on linear homomorphic encryption and a packing technique, and it involves exchanging the encrypted updates in four interacting protocols between the two servers for filtering and aggregation, which causes high communication and computation overheads. 
ShieldFL~\cite{ma2022shieldfl} also assumes two \emph{non-colluding} servers, and uses a two-trapdoor HE scheme based on the Paillier cryptosystem to achieve both secure and privacy-preserving FL.
In ShieldFL, the two servers execute three interactive protocols to compute the cosine similarity of the local updates in ciphertext. They then use the computed cosine similarities to filter out potential poisoned updates. Unfortunately, this also imposes significant computation and communication costs on the participating entities.
BREA~\cite{so2020byzantine} proposes a single-server framework where each participant secret-shares her local update with all the other participants in the system. 
Also, each participant locally computes the participant-wise Euclidean distances to the shares of all participants, and then sends the computed distances to the server, which uses them to filter out potential poisoned updates. 
The server then selects the potential good updates and asks the participants to locally aggregate the selected good shares and upload the aggregates to the server. Finally, the server reconstructs the global model from the received aggregates and sends it to the participants in the next training round. 
Although this work has the advantage of being single-server, it imposes high computation and communication overheads on the participating parties. 
\cite{ma2022differentially} integrate local DP on the participant's side with an intermediate shuffler between the participants and the aggregator server to achieve privacy.  
Besides, they use a Byzantine-robust stochastic aggregation algorithm at the server side to achieve security.
However, this work is subject to the inevitable trade-off between privacy and accuracy due to the use of DP. 
In addition, the shuffler is a third party, and hence its honesty is not guaranteed.
SAFELearning \cite{zhang2021safelearning} is based on the work of \cite{bonawitz2017practical} to support backdoor detection and privacy-preserving aggregation simultaneously. To this end, it randomly divides participants into subgroups, securely aggregates a sub-model for each subgroup and filters out malicious sub-models instead of individual models. However, this approach faces a trade-off of another kind: the smaller the number of participants in a subgroup, the less privacy for these participants; conversely, the larger the number of participants, the easier it is for the attacker to hide her malicious model amid honest ones. The paper proposes to disclose part of the parameters of the aggregated sub-models, leading to another privacy/security trade-off. The authors of~\cite{domingo2021secure} propose a co-utile FL to solve the accuracy-privacy-security conflict. The proposed solution preserves the global model accuracy and allows defending against security attacks, but its privacy is limited to only breaking the link between local updates and their originators. Although this provides a level of privacy protection, a semi-honest server still has direct access to original unlinked updates. Hence, it can use them to perform several privacy attacks, such as membership inference attacks and reconstruction attacks. ~\cite{chen2020training, zhang2021shufflefl} employ Trusted Execution Environments (TEEs) on the participants (for local training) and on the servers (for secure aggregation) in order to achieve accurate and privacy-preserving FL. However, the limited memory size of current TEEs makes them impractical for large DL models or large-scale FL systems.
Besides, the authors of those papers do not consider data poisoning attacks such as label-flipping attacks, and assume trust in the manufacturers of the TEEs, which seems too strong an assumption.

\section{Preliminaries}
\label{sec:prelim}

\subsection{Deep neural networks}
A deep neural network (DNN) is a function $F(x)$, obtained by composing $L$ functions $f_l, l\in [1, L]$, that maps an input $x\in \mathbb{R}^n$ to an output $y\in \mathbb{R}^z$. Each $f_l$ is a layer that is parameterized by a weight matrix $w_l$, a bias vector $b_l$ and an activation function $\sigma_l$. 
In this paper we use predictive DNNs as $z$-class classifiers. 

\subsection{Federated learning}
In federated learning (FL), $K$ participants and an aggregator server $A$ collaboratively build a global model $W$. In each training round $t\in[1, T]$,
the aggregator randomly selects a subset of participants $S$ of size $n = C \cdot K \ge 1$ where $K$ is the total number of participants in the system, and $C$ is the fraction of participants that are selected in $t$. After that, $A$ distributes the current global model $W^{t}$ to all participants in $S$. Besides $W^{t}$, $A$ sends a set of hyper-parameters to be used to train the local models, which includes the number of local epochs $E$, the local batch size $BS$, and the learning rate $\eta$. After receiving $W^t$, each participant $k$ divides her local data into batches of size $BS$ and performs $E$ local training epochs on her data to compute her local update $W^{t+1}_{k}$. Finally, participants upload their updates to $A$, which aggregates them into a new global model $W^{t+1}$.
The federated averaging algorithm (\textit{FedAvg})\cite{mcmahan2017communication} is usually employed to perform the aggregation, and it is defined as 
\begin{equation}
\label{eq:fedavg}
   W^{t+1} = \sum_{k=1}^{K} \frac{d_{(k)}}{d} W^{t+1}_{(k)},
\end{equation}
where $d_{(k)}$ is the number of data points locally held by worker $k$, and $d$ is the total number of data points locally held by the $K$ workers, that is, $d=\sum_{k=1}^{K} d_{(k)}$.
Note that FedAvg is the standard way to aggregate updates in FL and is not meant to counter security attacks.

\section{Attack Models}
\label{sec:attack_model}
\textbf{Privacy attack model.} We focus on a \emph{semi-honest} server $A^s$, who follows the protocol honestly, but tries to infer information about the private local data of by participants for training. Even though privacy attacks may also be orchestrated by participants based on the successive global models, the performance of such attacks is quite limited and degrades significantly as the number of participants increases \cite{melis2019exploiting}. Server-side attacks are much stronger, especially when the server sees local updates individually and can link them to their originators.  
In particular, membership inference attacks aim to determine if a specific example is part of the training data set \cite{melis2019exploiting, nasr2019comprehensive}. Property inference attacks try to infer specific properties about the training data \cite{ganju2018property, melis2019exploiting}. Distribution estimation attacks aim to obtain examples from the same distribution of the participants' training data \cite{hitaj2017deep}. 
Finally, reconstruction attacks are the most ambitious in that they attempt to extract the original training data from a participant's local update \cite{zhu2020deep, zhao2020idlg, geiping2020inverting}. To mount any of these attacks, the server needs to access individual local updates. Therefore, our goal is to disable privacy attacks by preventing the server from obtaining the participants' original updates.

\textbf{Security attack model.} Since the server has no control over the participants' behavior, a \emph{malicious} participant may deviate from the prescribed training protocol and attack the global model. Depending on the attacker's objective, security attacks can be divided into untargeted attacks \cite{fang2020local, biggio2012poisoning, jagielski2018manipulating}, and targeted attacks \cite{bagdasaryan2020backdoor, bhagoji2019analyzing, shafahi2018poison}. 
The former are attacks against model availability 
({\em e.g.} to prevent the model from converging), whereas the 
latter are against model integrity ({\em e.g.} to fool the global model into making incorrect predictions on some attacker-chosen inputs). Security attacks can be performed in two ways: model poisoning and data poisoning. In model poisoning \cite{blanchard2017machine, wu2020federated, bagdasaryan2020backdoor}, the attacker manipulates the model parameters before sending her update to the server. In data poisoning \cite{biggio2012poisoning, fung2018mitigating}, the attacker injects bad or biased data into her training data set before training her local model. 
In our work, we consider a number of attackers $K'\leq K/5$, that is, no more than $20\%$ of the $K$ participants in the system. 
Although some works in the literature assume larger percentages of attackers, finding more than $20\%$ of attackers in real-world FL scenarios is highly unlikely.
For example, with the millions of users~\cite{davenport_corbin} of Gboard~\cite{hard2018federated}, controlling even a small percentage of user devices requires the attacker(s) to compromise a large number of devices, which demands huge effort and resources and is therefore impractical. 
We assume the $K'$ attackers carry out two types of attacks: (1) untargeted attacks based on Gaussian noise \cite{li2019rsa, wu2020federated} and (2) targeted attacks based on label-flipping \cite{biggio2012poisoning,fung2018mitigating}. Furthermore, we assume that the attacker(s) have no control over the server or the honest participants.

\section{Fragmented Federated Learning}
\label{sec:meth}
In this section, we present the \emph{fragmented federated learning} (FFL) framework.
First, we give an overview of our framework and then present its design and protocols in detail. Table~\ref{tab:notation} summarizes the notation used in this paper.

\begin{table}[ht]
\centering
\scriptsize
\caption{Notation used in this paper}
\label{tab:notation}
\begin{tabular}{|l|l|}
\hline
Notation & Description \\ \hline
$W$                 &   Federated learning model or update \\
$W^t$ & Model or update at round $t$\\ 
$|W|$ & Number of model parameters\\
$\lambda$ & Bitlength of a parameter\\
$w_l$               &  Weight matrix of layer $l$ \\
$b_l$               &  Bias vector of layer $l$            \\
$\sigma_l$           &  Activation function of layer $l$ \\
$D$ & Dimensionality of the model\\
$L$ & Number of layers of the model\\
$D_L$ & Last-layer dimensionality\\
$K$                 &  Number of participants \\
$A$                 &  Aggregator server \\ 
$T$                 &   Number of training rounds \\
$C$                 &  Fraction of selected participants \\
$n$                 & Number of selected participants     \\
$S$                 &  Set of selected participants\\
$Sc$                & Set of candidate participants\\
$E$                 &  Number of local epochs \\
$BS$                & Size of local batch size \\ 
$\eta$              &   Local learning rate \\
$W_k$ & Participant $k$'s update\\
$W_{k,i}$           &  $i$-th parameter of $W_k$\\
$K'$                & Number of malicious participants     \\
$\gamma^t$            & Global reputation vector held by the server at time $t$\\
$\gamma_k^t$            & Global reputation given by the server to participant $k$ at time $t$ \\
$Q1_x$            & First quartile of the values of magnitude $x$ \\
$\zeta^t_k$         & Local reputation vector held by participant $k$ at time $t$\\
$\zeta_{k, j}^t$         & Local reputation given by participant $k$ to participant $j$ at time $t$\\
$p$ & Large prime (2048-bit long)\\
$\mathbb{G}_p$ & Multiplicative group of integers mod $p$\\
$g$ & Generator of $\mathbb{G}_p$\\
$PRNG(.)$           & Public pseudo-random number generator \\
$Enc_{{pk}_{A}}(\cdot)$ & Encryption under $A$'s public key \\ 
$Dec_{{sk}_{A}}(\cdot)$    & Decryption under $A$'s private key \\
$(W_k)_{mix}$ & Participant $k$'s mixed update\\
$(W_k)^{'}_{mix}$   &  Participant $k$'s encrypted mixed update\\
$s_{r_k}$           & Seed to generate $k$'s one-time pad (OTP) \\
$r_k$               & OTP used to encrypt $k$'s fragments \\
$m$                 & Binary mask: vector of 0's and 1's \\ 
$\lnot{m}$           & 1's complement of $m$ \\
\hline
\end{tabular}
\end{table}

\subsection{Overview}
\label{sec:overview}

 \begin{figure}[!ht]
    \centering
      \includegraphics[width=1\linewidth]{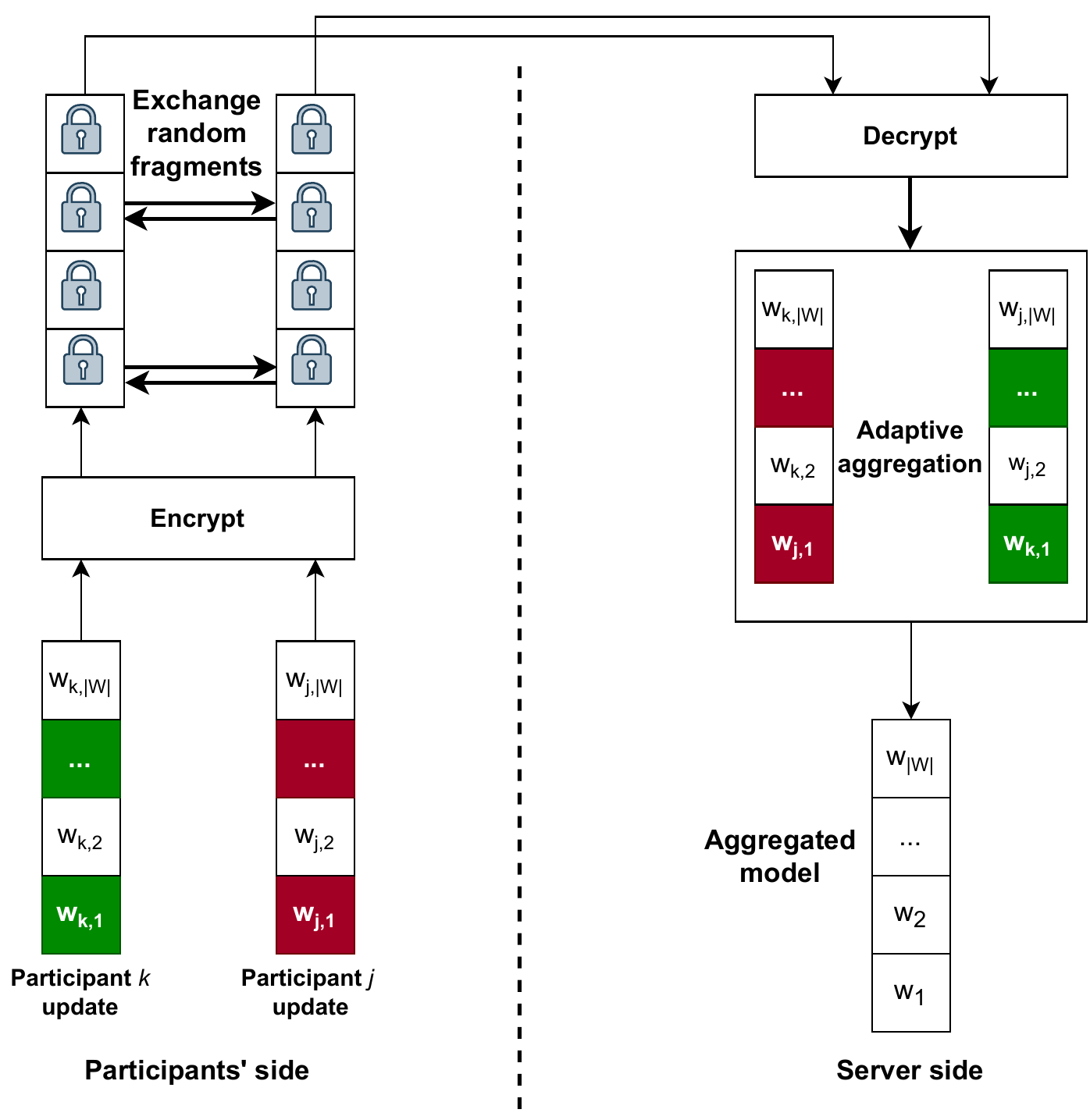}
      \caption{Overview of the FFL framework}
      \label{ffl_framework}
\end{figure}
Fig.\ref{ffl_framework} shows an overview of the FFL framework. The key idea is to have the participants randomly fragment and mix their updates before sending them to the server. Specifically, two participants agree on some symmetric random indices in their update vectors and exchange the parameters of those indices after they are encrypted. The participants then send the encrypted mixed updates to the server instead of their original updates. 
Since exchanging parameters is done without changing their original coordinate positions, the server can calculate the average of the mixed updates after decrypting the encrypted mixed updates and obtain the same updated global model that would result from averaging the original updates. 
Sending mixed updates instead of the original ones breaks the link between the updates and their originators, and does not give the server direct access to individual updates. Thus, FFL  prevents a semi-honest server from mounting powerful privacy attacks.
Also, the global model's accuracy is preserved because the parameters of the exchanged fragments are kept unaltered. 
Yet, exchanging fragments between participants should neither impose significant communication or computation overheads nor protect against server-side privacy attacks at the price of facilitating participant-side privacy attacks. Thus, we design a fragment exchanging protocol based on lightweight cryptographic tools to:
i) allow the participants to exchange and mix random fragments of their updates while incurring 
very minor communication and computation overheads,
ii) prevent participants from seeing each other's original fragments,
and iii) allow the server to correctly compute the updated global model from the mixed updates without being able to recover the individual original updates or link them to their originators. 
However, averaging updates to compute the global model goes against countering security attacks.
Also, the design of the exchange protocol gives $n'$ attackers the chance of poisoning $2n'$ coordinates.
We tackle both issues by designing a novel reputation-based defense tailored for FFL that builds trust in participants based on the quality of the mixed updates they send and the fragments they exchange.
Specifically, the server holds a global reputation vector and uses it to select participants for training and adaptively aggregate their mixed updates. A participant who repeatedly sends poisoned mixed updates will have a lower global reputation, and thus will not be selected in future training rounds.
Also, the mixed updates she sends will have little to no influence on the global model aggregation.
On the other hand, each participant holds a local reputation vector and uses it to decide whether to exchange fragments with the other participants.
The local reputation increases or decreases based on the quality of the fragments exchanged by participants.
An attacker who exchanges poisoned fragments with honest participants will eventually find no honest participant to exchange with.
The similarity between the mixed updates and their centroid is used to measure their quality.

\subsection{FFL design}
\label{sec:design}
There are $K + 1$ parties in the FFL framework: $K$ participants and an aggregator server $A$. 
We assume that $K'$ participants (with $K'\leq K/5$) may be malicious and attack the global model's availability or integrity by sending poisoned updates to $A$. 
Also, we assume that the aggregator server is semi-honest and may use the participants' updates to infer information about their local private data. 
Before starting any training task, $A$ generates a public/private key pair $(pk_A, sk_A)$ and broadcasts $pk_A$ to the $K$ participants.
Then, the server uses the metadata of the participants' devices (like IP or MAC addresses) to register them into the system. 
After that, $A$ assigns each participant a unique pseudonym. 
All parties are assumed to possess pairwise secure communication channels using communication protocols such as TLS or HTTPS and communicate with each other using pseudonyms.
Protocol~\ref{protocol1} formalizes the framework we propose.
At the first global training round, $A$ starts a federated learning task by initializing the global model $W^0$ and the global reputation vector $\gamma^0 \in \mathbb{R}^{K}$. The global reputation vector $\gamma$ is used to select participants for training and re-weight their mixed updates in the global model aggregation. 
Also, each participant $k$ initializes a local reputation vector $\zeta_k^0 \in \mathbb{R}^{K-1}$ that is used to store a local reputation value for each participant in the system different from $k$. 
In the first global training round, both $\gamma^0$ and $\zeta_k^0$ are initialized with zero vectors.
Later in this section, we explain how $\gamma$ and $\zeta_k$ are calculated and used in our framework.

\textbf{Adaptive selection of participants.}
At each round $t$, $A$ uses Procedure~\ref{proc1} to select a set $S$ of potential honest participants.
First, the first quartile of the global reputation $Q1_{\gamma^t}$ is computed.
Then, participants with a global reputation greater than or equal to $Q1_{\gamma^t}$ are assigned to the candidate set $Sc$.
Finally, the server selects a random set $S$ of $n=\max(C \cdot |Sc|, 2)$ participants from $Sc$.
As we explain later in this section, we expect the attackers 
(up to 20\%) to have global reputations less than $Q1_{\gamma^t}$. 
Note that, in the standard FL setting, $n = \max(C \cdot K, 1)$, but we replace $1$ by $2$ because FFL needs at least $2$ participants to be applicable. 
After that, $A$ sends the global model $W^t$ to the participants in $S$ through a secure channel. 
The server also adds the pseudonyms of the selected participants to a public board seen by all participants.

\textbf{Local training.}
Each participant $k\in S$ trains the received model on her private local data to obtain her local update $W_k$. 
Then, $k$ scales her computed local update by the number of data points she holds, $d_k$. 
The next step is for $k$ to randomly select another participant $j$ from the public board to exchange a fragment of her encrypted update $W_k$ with her.

\textbf{Participant selection for exchange.}
To avoid making herself a bridge for poisoning the global model, $k$ uses her local reputation vector $\zeta_k^t$ to decide whether to exchange fragments with any other participant $j$.
To do so, $k$ computes the first quartile of her local reputation vector $Q1_{\zeta_k^t}$ and assigns the participants in the public board that have local reputations greater than $Q1_{\zeta_k^t}$ to the set $Sc_k$.
Then, $k$ selects a random participant $j$ from $Sc_k$ and asks her to exchange fragments.
When $j$ receives the request from $k$ for fragment exchange, she checks the local reputation of $k$, 
$\zeta_{j, k}^t$, and if $\zeta_{j, k}^t<Q1_{\zeta_j^t}$, $j$ rejects $k$'s request for exchange.
Otherwise, $j$ accepts to exchange fragments with $k$, and they call protocol EXCHANGE\_FRAGMENTS. 
At the end of the protocol, both $k$ and $j$ obtain two encrypted mixed updates with their corresponding encrypted one-time pad (OTP) seeds: $({W_k})'_{mix}$ and $Enc_{{pk}_A}(s_{r_j})$ for participant $k$, and $({W_j})'_{mix}$ and $Enc_{{pk}_A}(s_{r_k})$ for participant $j$, and send them to $A$. Note that the seeds to generate the mixed updates' OTPs are encrypted under $A$'s public key and swapped between $k$ and $j$.
Later in this section, we detail how local reputations are computed.

\begin{algorithm}[!ht]
\footnotesize
\SetKwProg{Fn}{Function}{}{end}
\caption{Fragmented federated learning}
\label{protocol1}
\SetAlgoLined
\KwInput{$K, C, BS, E, \eta, T$}
\KwOutput{$W^T$, the global model after $T$ training rounds}

$A$ initializes $W^{0}, \gamma^0 = \{\gamma_k^0 = 0\}_{k = 1}^{K}$;

\For{each participant $k \in [1, K]$}
{
    $k$ initializes $\zeta_k^0 = \{\zeta_{k, j \ne k}^0 = 0\}$;
}

\For{each round $t \in [0, T-1]$}
{
     $S \leftarrow$\FuncSty{SELECT\_PARTICIPANTS($C, \gamma^t)$};

    $A$ sends $W^t$ to all participants in $S$
    
    \For{each participant $k \in S$ \textbf{in parallel }} 
        {
            $k$ calls $(W_{k}^{t+1})'_{mix}, Enc_{{pk}_A}(s_{r_j}) \leftarrow$
            \FuncSty{PARTICIPANT\_UPDATE($k, W^{t})$};
            
            $k$ sends $(W_{k}^{t+1})'_{mix}, Enc_{{pk}_A}(s_{r_j})$ to $A$;
    
            $A$ decrypts $Enc_{{pk}_A}(s_{r_j})$ with its private key to obtain $s_{r_j}$;
            
            $A$ generates $r_j \leftarrow PRNG(s_{r_j})$;
             
            $A$ decrypts $(W_{k}^{t+1})'_{mix}$ with $r_j$ as
            $(W_{k}^{t+1})_{mix} \leftarrow
            (W_{k}^{t+1})'_{mix} \oplus r_j$;
        }
        
    \FuncSty{COMPUTE\_SIMILARITY()};
    
    \FuncSty{UPDATE\_REPUTATIONS()};
        
    Let $\nu^{t}$ be the computed trust vector from Expression~(\ref{normalize_gr});
        
    $A$ aggregates 
    $W^{t} \leftarrow \frac{1}{\sum_{k \in S} d_k} \sum \nu_k^{t} (W_{k}^{t+1})_{mix}$\label{lin18};
    
}

\Fn{\FuncSty{PARTICIPANT\_UPDATE}($k, W^{t}$)}{
    {
        
        $W_k \leftarrow W^{t}$;  
     
        \For{each local epoch $e \in [1, E]$}
        {
	    \For {each batch $\beta$ of size $BS$}
            {
                $W_k \leftarrow W_k-\eta\nabla \mathcal{L}(W_k, \beta)$; 
            }
        }
        
         $W_k\leftarrow d_k W_k$;
         
         exchanged=false;
         
         Let $Q1_{\zeta_k^t}$ be the first quartile in $\zeta_k^t$;
         
         Let $Sc_k$ be the set of participants with local reputations greater than $Q1_{\zeta_k^t}$;
        
         \While{not exchanged}
         {
         $j \leftarrow$ select a random participant of $Sc_k$;
         
         \If {$\zeta_{j, k}^t \geq Q1_{\zeta_j^t}$} {
         
            $(W_{mix})', Enc_{{pk}_A}(s_{r_j})\leftarrow\FuncSty{EXCHANGE\_FRAGMENTS}(k,j)$;
         
         exchanged=true; 
         
            \Return  $(W_{mix})', Enc_{{pk}_A}(s_{r_j})$;
         }
       }
    }    
}
\end{algorithm}

\setcounter{algocf}{0}
\renewcommand*{\algorithmcfname}{Procedure}
\begin{algorithm}[!ht]
\footnotesize
  \DontPrintSemicolon
  \SetKwFunction{FMain}{SELECT_PARTICIPANTS}
  \caption{Adaptive selection of participants}
  \label{proc1}
  \SetKwProg{Fn}{}{:}{}
  \Fn{\FMain{$C, \gamma^t$}}{
  
            $Sc\leftarrow [ \ ]$; 
            
            Let $Q1_{\gamma^t}$ be the first quartile of $\gamma^t$;

            \For{each global reputation $\gamma_k^t \in \gamma^t$}
            {
            \If{$\gamma_k^t \geq Q1_{\gamma^t}$}{
                Add($k, Sc$);
            }
            }
            
        $n \leftarrow \max(C \cdot |Sc|, 2)$;
        
        $S\leftarrow$ select a random set of $n$ participants from $Sc$;

    \KwRet $S$;
}
\end{algorithm}

\textbf{Fragments exchanging and mixing.}
The protocol for exchanging fragments is key in our approach. 
We design this protocol to privately and efficiently exchange fragments of updates between participants. Specifically, a participant $k$ (the \emph{Initiator}) seeks to exchange a random fragment of her update with another participant $j$ (the \emph{Acceptor}). 
Both participants leverage a key exchange protocol to jointly generate two
complementing masks used to fragment their updates. The fragments are then
encrypted using one-time pads,
that is, random sequences added to the fragments, that allow {\em Initiator}
and {\em Acceptor} to combine
their updates but prevent each other from accessing their counterpart's update
parameters.
These one-time pads can only be removed by the central server after they
have been mixed
using the secret information provided by both participants.

Let $PRNG(\cdot)$ be a public pseudo-random number generator that takes an
integer as a seed.
Let $\mathbb{G}_p$ be the multiplicative group of integers modulo a large prime $p$ (2048-bit long), and $g$ a
generator of the group. 
All subsequent operations with vectors are performed coordinate-wise. Protocol EXCHANGE\_FRAGMENTS is as follows: 
\begin{enumerate}
    \item Participant $k$ randomly generates integers $a$, $s_{r_k}$ and $s_{\rho_k}$, 
    and sends $g^a \mod p$ and $Enc_{pk_{A}}(s_{r_k})$ to $j$. 
    
    \item Participant $j$ generates integers $b$, $s_{r_j}$ and $s_{\rho_j}$ and computes $g^b \mod p$, $r_j = PRNG(s_{r_j})$ and 
    $\rho_j = PRNG(s_{\rho_j})$, with the last two numbers 
    having bitlength $|W|\lambda$ (the bitlength of an update, that is, number of parameters times the bitlength of a parameter).  $j$ generates a mask $m=PRNG(g^{ab} \mod p)$ of 
    $0$'s and $1$'s of bitlength $|W|$ (equal to the number of update parameters) and its
    1-complement inverse mask $\lnot{m}$ such that $m \oplus \lnot{m} = \mathbf{1}_{|W|}$ (adding modulo 2 a mask to its 1-complement inverse yields the all ones mask). Then, $j$ sends $g^b \mod p$, $Enc_{pk_{A}}(s_{r_j})$, 
    $W_j \oplus r_j \oplus \rho_j$, 
    and $(W_j \odot \lnot{m}) \oplus \rho_j$ to $k$, where $\odot$ is an operator between an update and a mask that preserves the $i$-th update parameter if the $i$-th mask bit is 1, and clears to 0 the $i$-th update parameter if the $i$-th mask bit is 0.\\
 The random uniform binary mask $m$ will be used to fragment both $j$'s and $k$'s updates. 
    Also, both $r_j$ and $\rho_j$ are OTPs known to $j$ only,
    and used by $j$ to hide her original fragments from $k$. 
    \item $k$ computes $r_k = PRNG(s_{r_k})$ and $\rho_k = PRNG(s_{\rho_k})$ both of length $|W|\lambda$. $k$ generates $m=PRNG(g^{ab} \mod p)$ 
    and its inverse $\lnot{m}$. Then, $k$ computes her encrypted mixed update as follows:
    \begin{align}
    \label{exp1}
(W_k)^{'}_{mix} &= (W_k)_{mix} \oplus {r_j} \nonumber\\
                        &= W_k \oplus (W_k \odot m) \nonumber\\
                        &\oplus (W_j \oplus r_j \oplus \rho_j) \oplus ((W_j \odot \lnot{m}) \oplus \rho_j).
    \end{align}
    In Expression (\ref{exp1}), the result of subexpression $W_k \oplus (W_k \odot m)$ is to clear to 0 all parameters of $W_k$ at coordinates where the mask $m$ has a 1. On the other hand, subexpression $W_j \oplus (W_j \odot \lnot{m})$ clears to 0 all parameters of $W_j$ where the mask $m$ has a 0. Bitwise adding both subexpressions yields the mixed update $(W_k)_{mix}$. Note that the two appearances of $\rho_j$ cancel each other and $r_j$ encrypts the mixed update into $(W_k)^{'}_{mix}$.
    Since the mask $m$ is random, the mixed update can be expected to contain the same number of 1s and 0s. Hence, $(W_k)_{mix}$ can be expected to contain half of the coordinates from $W_j$
    and the other half from $W_k$. Another important remark is that the mixed update is encrypted using $r_j$, whereas $k$ has only $Enc_{pk_{A}}(s_{r_j})$, so $k$ cannot obtain the full cleartext mixed update. \\
    
    After that, $k$ sends $(W_k \oplus r_k \oplus \rho_k)$, and $(W_k \odot \lnot{m}) \oplus \rho_k$ to $j$. She
    also sends $(W_k)^{'}_{mix}, Enc_{pk_{A}}(s_{r_j})$ to the server.\\
    
Receiving $Enc_{pk_{A}}(s_{r_j})$ 
allows the server to decrypt the encrypted mixed update but does not allow it to extract any original fragment separately or link it to the fragment's originator, as we show in Appendix~\ref{sec:privacy}. Moreover, the use of the Diffie-Hellman key exchange method~\cite{diffie1976new} to generate seed  $g^{ab}$ ensures that no one but $k$ and $j$ knows the generated mask $m$. 
    
    \item Similarly, $j$ computes her encrypted mixed update as: 
    \begin{equation}
    \begin{aligned}
    (W_j)^{'}_{mix} &= (W_j)_{mix} \oplus {r_k} \\
                        &= W_j \oplus (W_j \odot m) \\
                        &\oplus (W_k \oplus r_k \oplus \rho_k) \oplus ((W_k \odot \lnot{m}) \oplus \rho_k).
    \end{aligned}
     \end{equation}
    Finally, $j$ sends  $(W_j)^{'}_{mix}, Enc_{pk_{A}}(s_{r_k})$ to the server.
\end{enumerate}
A naive and simpler way to exchange fragments would be to encrypt the updates coordinate by coordinate by using the server public key before exchanging fragments. 
However, this would cause significant communication and computation overheads for both the server and the par\-ti\-ci\-pants. Instead, our protocol uses OTPs as a means to encrypt the mixed updates. In other words, we rely on symmetric-key encryption, which is much more efficient when dealing with models that may contain millions of parameters. 

\textbf{Decryption of encrypted mixed updates.}
When at training round $t$ the server receives a mixed encrypted update $(W_{k}^{t+1})'_{mix}$ and its corresponding encrypted seed $Enc_{pk_{A}}(s_{r_j})$, it uses its private key to obtain the clear seed $s_{r_j} = Dec_{{sk}_{A}}(Enc_{pk_{A}}(s_{r_j}))$. 
Then $A$ regenerates $r_j$ using $s_{r_j}$ as a seed to $PRNG(\cdot)$
and bitwise adds $r_j$ to $(W_{k}^{t+1})'_{mix}$ to obtain $(W_{k}^{t+1})_{mix}$. 
$A$ does the same for all $k \in S$ to get the plain mixed updates $\{(W_{k}^{t+1})_{mix}|k\in S\}$. 

\textbf{Computation of reputation and trust values.} 
We explain how reputations are computed and used to neutralize potential attackers in the system.
First, the similarity between the mixed updates and their centroid are computed to measure their quality. Second, the global and local reputations are updated based on the computed similarities. Third, 
the global reputations are used to compute the trust values for the senders of the mixed updates, and these trust values finally weight the mixed updates when aggregating them to obtain 
the new global model. The above three steps
are detailed next:

{\em 1. Compute similarity.} 
In this procedure, the server computes the corresponding mixed gradients of the set $\{(W_{k}^{t+1})_{mix}|k\in S\}$. 
For a mixed update $(W_{k}^{t+1})_{mix}$, $A$ computes its corresponding gradient as
\begin{equation}
    \label{extract_gradient}
     (\nabla_{k}^{t})_{mix} = (W^t - (W_{k}^{t+1})_{mix})/\eta.
\end{equation}
The impact of untargeted attacks is expected to be evident in the magnitudes of whole poisoned updates, whereas the impact of the targeted attacks is expected to be more evident in the last-layer gradients that carry the indicative features and map directly to the prediction probability~\cite{fung2020limitations}.
Therefore, we independently analyze the whole mixed gradients and the last layer of mixed gradients and then combine the results to simultaneously counter both untargeted and targeted attacks. 
To do so, $A$ first computes the magnitude of each mixed gradient 
and obtains $\{||(\nabla_{k}^{t})_{mix}|||k\in S\}$. 
Then, it computes the median of the computed magnitudes, $med_{mix}^{t}$. 
Since most mixed updates are good, $med_{mix}^{t}$ is expected to fall amid good mixed updates.
After that, $A$ computes the distance between $med_{mix}^{t}$ and each magnitude $||(\nabla_{k}^{t})_{mix}||$ as
\begin{equation}
    \label{dist_sim}
    ds_k^{t} = |med_{mix}^{t} - ||(\nabla_{k}^{t})_{mix}|||.
\end{equation}
For each participant $k$, $ds_k^{t}$ represents how far the magnitude of the participant's mixed update is from $med_{mix}^{t}$. 
In the case of untargeted attacks, poisoned mixed updates are expected to deviate more from $med_{mix}^{t}$ and thus result in 
larger $ds^{t}$ values. 
It would also be possible to compute distances
based on the updates themselves and the median update rather
than their magnitudes, but this would 
cause a higher computational overhead on the server.
To capture the behavior of targeted attacks, $A$ extracts the mixed gradients of the last layer $L$ and obtains the set $\{(\nabla_{k}^{t})_{mix, L}|k\in S\}$. 
Then, $A$ computes $med_{mix, L}^{t}$ as the coordinate-wise 
median of the previous
set. Since honest participants share the same objective and are a majority, the median of the mixed last-layer gradients is expected to lie in the same direction as the honest participants' last-layer gradients. 
Thus, $A$ computes the cosine similarity between $med_{mix, L}^{t}$ and each $(\nabla_{k}^{t})_{mix, L}$ as
\begin{equation}
    \label{cos_sim}
    cs_k^{t} = \cos \varphi = \frac{(\nabla_{k}^{t})_{mix, L} \cdot med_{mix, L}^{t}}{||(\nabla_{k}^{t})_{mix, L}|| \cdot || med_{mix, L}^{t}||}.
\end{equation}
This yields a cosine similarity value $cs_k^{t+1} \in [-1, 1]$ for each participant's mixed update. 
Note that poisoned mixed updates, being a minority,
can be expected to have lower cosine similarity with $med_{mix, L}^{t}$ than good mixed updates. 

To compute a combined similarity value that captures the behaviors of both untargeted and targeted attacks, $A$ performs the following steps:
\begin{enumerate}
\item Normalize and invert the computed distances into the range $[0, 1]$ as
 $ds_k^{t} = 1 - ds_k^{t}/\max_{j\in S}(ds_j^{t}).$
\item Normalize the cosine similarity vector $cs^{t}$ into 
the range $[0, 1]$ as
    $cs_k^{t} = (cs_k^{t} + 1)/2.$
\item Compute the aggregated similarity value of $k$ as
        $sim_k^{t} = \alpha ds_k^{t} + (1 - \alpha) cs_k^{t}.$
\end{enumerate}
In this way, smaller $sim_k^{t}$ values are likely to correspond to potential poisoned mixed updates.
The hyperparameter $\alpha \in [0,1]$ is used to tune the simultaneous detection of untargeted and targeted attacks. 

{\em 2. Update reputations.} After computing the vector $sim^{t}$
containing similarities for all participants $k\in S$, $A$ updates the global reputations of those participants as
\begin{equation}
    \label{update_gr}
    \gamma_k^{t+1} = \gamma_k^{t} +  (sim_k^{t} - Q1_{sim^{t}}),
\end{equation}
where $Q1_{sim^t}$ is the first quartile of similarity values.
Based on its computed similarity $sim_k^{t}$, $k$'s global reputation increases or decreases: a similarity less than  $Q1_{sim^t}$
causes a reputation decrease and may lead to negative reputation.
If $k$ is an attacker and always sends poisoned mixed updates, its global reputation will get smaller and smaller and thus he will be excluded by the server from future selection for training.
However, an honest participant $k$ could also experience
a reputation decrease if she sent an update with a poisoned fragment during the exchange.
Therefore, $A$ also sends $sim_k^{t}- Q1_{sim^{t}}$ to $k$, who uses it to update the local reputation of the participant $j$ with whom $k$ has exchanged fragments in training round $t$:
\begin{equation}
    \label{update_lr}
    \zeta_{k, j}^{t+1} = \zeta_{k, j}^{t} + (sim_k^{t} - Q1_{sim^{t}}).
\end{equation}
This ensures that, when $k$ obtains a small similarity value because of $j$'s poisoned fragment, she reduces $j$'s local reputation.
As a result, if $j$ exchanges poisoned fragments with the other participants, she will get a bad local reputation among all honest participants and thus become an outcast, so that no honest participant will accept to exchange fragments with him in the future.

{\em 3. Adaptive model aggregation.} The server adaptively aggregates the received mixed updates using the global reputations of their senders. 
First, $A$ computes the trust vector $\nu^{t}$ as
\begin{equation}
    \label{normalize_gr}
    \nu^{t} = \max(\tanh{(\gamma^{t+1} - Q1_{\gamma^{t+1}})},0).
\end{equation}
The hyperbolic tangent function ($\tanh$) squashes its negative inputs into the range $[-1, 0[$,  and its positive inputs into the range $]0, 1]$.
Since the attackers are expected to have values lower than $Q1_{\gamma^{t+1}}$, $\tanh$ will make their trust scores in $\nu^{t}$ less than $0$. The maximum in Expression (\ref{normalize_gr}) sets the attackers' trust values to 0.
Note that, since the reputations of honest participants are likely to increase in every training round, their trust scores will converge to $1$ as training evolves, due to the use of $\tanh$.
Finally, the server uses the values in the computed trust vector $\nu^{t}$ to re-weight and aggregate the mixed updates 
(line~\ref{lin18} of Protocol~\ref{protocol1}). Since the 
attackers' trust values are 0, their updates are neutralized
in the aggregation.

\section{Privacy Analysis}
\label{sec:privacy}
In this section, we theoretically demonstrate the effectiveness of FFL against privacy attacks.

\subsection{Privacy between participants}

\begin{proposition}
\label{props0}
Two participants $j$ and $k$ exchanging fragments in Protocol EXCHANGE\_FRAGMENTS do not learn each other's fragments.
\end{proposition}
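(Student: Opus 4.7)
The plan is to show that every piece of information participant $k$ receives from $j$ (and, by symmetry, every piece $j$ receives from $k$) falls into one of three categories: a discrete-log-protected group element, a ciphertext under the server's public key, or a bitstring one-time-padded with a value the other party does not know. The combination of these observations will imply that neither party can reconstruct any coordinate of the other's fragment, which is exactly the statement of the proposition.

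First, I would enumerate $k$'s view of $j$'s messages during EXCHANGE\_FRAGMENTS. It consists of $g^b \bmod p$, $Enc_{pk_A}(s_{r_j})$, $W_j \oplus r_j \oplus \rho_j$, and $(W_j \odot \lnot{m}) \oplus \rho_j$. The first two items carry no information about $W_j$: $g^b$ depends only on $j$'s Diffie--Hellman exponent, while the ciphertext $Enc_{pk_A}(s_{r_j})$ is semantically secure against $k$, who does not hold $sk_A$. For the remaining two items, note that $\rho_j = PRNG(s_{\rho_j})$ is derived from a seed generated locally by $j$ and never transmitted. Hence $\rho_j$ is, from $k$'s viewpoint, computationally indistinguishable from a uniformly random bitstring of length $|W|\lambda$, and each of the two masked items taken in isolation is therefore indistinguishable from a uniform string.

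Second, I would rule out attacks that combine the two masked messages $j$ sends, since they share the pad $\rho_j$. Bitwise adding them cancels $\rho_j$ and yields $(W_j \odot m) \oplus r_j$. Here $r_j = PRNG(s_{r_j})$ comes from a seed $j$ keeps privately; the only copy of $s_{r_j}$ reaching $k$ is inside $Enc_{pk_A}(s_{r_j})$. Under IND-CPA security of the public-key scheme and the pseudorandomness of $PRNG(\cdot)$, $r_j$ is pseudorandom from $k$'s viewpoint, so $(W_j \odot m) \oplus r_j$ is also pseudorandom and reveals no parameter of $W_j$. One may then argue, by a standard hybrid argument, that the entire transcript $k$ sees is indistinguishable from the transcript obtained by replacing $W_j$ with any other update of the same shape, so $k$ learns nothing about $W_j$'s fragments.

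Third, a symmetric argument handles $j$'s view of $k$: the only information $j$ obtains about $W_k$ is masked by the OTPs $r_k$ and $\rho_k$, whose seeds $k$ never discloses to $j$ in cleartext. The main obstacle is precisely the second step, since it requires arguing that no linear combination of the cleartext messages can strip both $r_j$ and $\rho_j$ simultaneously; this reduces cleanly to the independence of the four seeds $s_{r_k}, s_{\rho_k}, s_{r_j}, s_{\rho_j}$, the IND-CPA security of $Enc_{pk_A}(\cdot)$, the hardness of discrete logarithms in $\mathbb{G}_p$, and the pseudorandomness of $PRNG(\cdot)$. Assembling these pieces completes the proof of the proposition.
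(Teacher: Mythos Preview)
Your argument is correct and reaches the same conclusion, but it proceeds along a more abstract cryptographic route than the paper's own proof. You reduce everything to standard assumptions (IND-CPA of $Enc_{pk_A}$, pseudorandomness of $PRNG$, hardness of discrete log) and invoke a hybrid argument to conclude that $k$'s entire view is independent of $W_j$. The paper instead performs a concrete coordinate-level analysis: it observes that because $k$ knows the mask $m$, XORing the two masked messages yields $(W_j\odot m)\oplus r_j$, and hence $k$ actually \emph{recovers} the bits of $r_j$ at the coordinates where $m=0$. The paper then notes that this partial leakage of $r_j$ is harmless, since the only parameters of $j$ that enter $k$'s mixed update sit at the coordinates where $m=1$, where $r_j$ remains hidden.

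Both approaches are valid. Yours has the advantage of being closer to a formal security reduction and of showing that \emph{no} coordinate of $W_j$ leaks (not just the fragment that lands in $(W_k)_{mix}$). The paper's approach has the advantage of making explicit the one non-trivial leakage in the protocol --- the partial recovery of $r_j$ --- and arguing directly why it does not compromise the fragment. Your sentence ``$r_j$ is pseudorandom from $k$'s viewpoint'' is slightly loose in light of this: after combining the two messages, $r_j$ is no longer fully hidden from $k$. Your overall indistinguishability claim for $W_j$ still goes through (since the recoverable portion of $r_j$ masks only zeroed-out coordinates), but acknowledging this point, as the paper does, would make the argument tighter.
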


\begin{proof}
Since the protocol is symmetric, we only need to prove that participant $k$ does not learn participant $j$'s fragments. Participant $k$ receives the following from participant $j$:  
$g^b \mod p$, $Enc_{pk_{A}}(s_{r_j})$, 
    $W_j \oplus r_j \oplus \rho_j$, 
    and $(W_j \odot \lnot{m}) \oplus \rho_j$. 
Then participant $k$ can compute the common mask $m$, but she cannot decrypt $s_{r_j}$, which would allow her to re-create $r_j$. However, participant $k$ can add $W_j \oplus r_j \oplus \rho_j$ and $(W_j \odot \lnot{m}) \oplus \rho_j$, which, combined to $k$'s knowledge of $m$, allows $k$ to learn the bits of $r_j$ that encrypt parameters for which bits in the mask are 0. However, in the mixed update $(W_k)_{mix}$ computed by participant $k$ in Expression (\ref{exp1}), all parameters from participant $j$ corresponding to mask positions equal to 0 are cleared to 0. Hence, the bits of $r_j$ discovered by $k$ do not allow her to retrieve any parameter of $j$. 
\end{proof}

Note that a third-party intruder observing 
the exchange of fragments cannot do better than any of the two participants at learning the other participant's parameters. In fact, the intruder is likely to be in a worse position, because he does not know $m$ unless it is leaked by one of the participants. 

\subsection{Unlinking participants from their updates}
\label{unlik}

\begin{proposition}
\label{props1}
Given a mixed update $(W_k)_{mix}$ obtained by a participant $k$ after exchanging fragments, 
the probability that a certain subset of $u$ parameters in $W_k$ 
is entirely present in participant $k$'s mixed update $(W_k)_{mix}$ is $(1/2)^u$.
\end{proposition}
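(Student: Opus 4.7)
The plan is to reduce the statement to a simple counting/independence argument on the mask $m$ that governs which coordinates of $W_k$ survive the mixing operation. From Expression~(\ref{exp1}), recall that $W_k \oplus (W_k \odot m)$ zeroes out exactly those coordinates of $W_k$ where $m$ has a $1$, and dually $W_j \oplus (W_j \odot \lnot{m})$ zeroes out coordinates of $W_j$ where $m$ has a $0$. Consequently, the $i$-th coordinate of $(W_k)_{mix}$ equals $W_{k,i}$ if and only if $m_i = 0$, and equals $W_{j,i}$ otherwise. So the event that a given parameter of $W_k$ is present in $(W_k)_{mix}$ is exactly the event that its mask bit is zero.

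Next, I would invoke the construction of $m$ to argue that these indicator events are independent and unbiased. Because $m$ is produced as $PRNG(g^{ab} \mod p)$ from a Diffie--Hellman shared secret known only to $k$ and $j$, and the PRNG output is modeled as a uniformly random $|W|$-bit string, each coordinate $m_i$ is independently $0$ or $1$ with probability $1/2$. Fixing a subset of $u$ indices corresponding to the chosen parameters in $W_k$, the probability that all $u$ mask bits are simultaneously $0$ is then $(1/2)^u$ by independence, which is exactly the claimed bound.

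The one step that requires care, and which I would flag as the main obstacle, is the uniformity/independence assumption on the bits of $m$. Strictly speaking, $m$ is pseudorandom rather than truly random, so the equality $(1/2)^u$ holds up to the distinguishing advantage of an adversary against $PRNG$ seeded from the DH secret; under the standard DDH plus secure-PRNG assumptions this advantage is negligible, and the bound is tight. I would state this modeling assumption explicitly at the start of the proof and then the remainder reduces to the one-line independence calculation sketched above.
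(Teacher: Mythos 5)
Your proof is correct and follows essentially the same route as the paper's: identify that a parameter of $W_k$ survives into $(W_k)_{mix}$ exactly where the mask bit is $0$, then compute the probability that $u$ specified mask positions are all $0$ as $(1/2)^u$ under the assumption that $PRNG$ behaves like a good (uniform, independent-bit) generator. Your explicit remark that the bound holds only up to the distinguishing advantage against the PRNG is a slightly more careful statement of the paper's ``if the generator $PRNG$ used to obtain $m$ is good'' caveat, but the argument is the same.
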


\begin{proof}
\label{proof1}
By construction of Protocol EXCHANGE\_FRAGMENTS,
in the mixed update $(W_k)_{mix}$ the original parameters of participant $k$ are found only where the mask $m$ has bits with value 0. Now, the probability a specific set of $u$ positions in $m$ being 0 is $(1/2)^u$ if the generator $PRNG$ used to obtain $m$ is good.
\end{proof}

A consequence of Proposition~\ref{props1} is that mixing updates effectively unlinks them from their originators: indeed, the probability
that a specific set of parameters in the original update survives in the mixed update decreases exponentially with the set size.
The effectiveness of unlinking updates against privacy attacks is examined in the next sections.

\subsection{Robustness against membership inference attacks}
\label{membership_attacks}

Membership inference attacks (MIAs) \cite{nasr2019comprehensive, melis2019exploiting} leverage a participant's local update $W_k$ to infer if a specific data point $(x, y)$ was part of her training data. 
In \cite{nasr2019comprehensive}, a semi-honest server exploits the distinguishable pattern that $(x, y)$ leaves on $W_k$. To carry out the attack, 
the server trains a binary classifier using some available data 
containing member and non-member data points and the components of a model trained on the member data points. The binary classifier predicts a membership score for any target data point $(x,y)$. The membership score is the probability that $(x,y)$ belongs to a target participant's training data. The attack components include: the calculated gradient vector on the data point, the activations of the intermediate layers of the participant's model, the activation of the output layer, and a scalar representing the loss of the model on the data point. The authors demonstrate that the gradient vector on the target data point is the most important component for the success of the attack. According to the way a semi-honest server performs MIAs, we can state the following proposition.

\begin{proposition}
\label{props2}
Given a mixed update $(W_k)_{mix}$ of a participant $k$, it is not possible to correctly predict the membership score of a target data point $(x,y)$ belonging to $k$ by using $(W_k)_{mix}$. 
\end{proposition}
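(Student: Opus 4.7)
My plan is to reduce the claim to the observation that an MIA classifier, as described in Nasr et al., is trained on features derived from an \emph{intact} per-participant update $W_k$ computed on $k$'s own data, and to show that the mixed object $(W_k)_{mix}$ provides neither these features nor a reliable way to reconstruct them. The first step is to recall explicitly which components the attacker needs: the gradient vector on the queried point $(x,y)$, the intermediate-layer activations, the output-layer activation, and the loss scalar. All of these are computed from a coherent set of weights belonging to the same participant; the classifier's distinguishing power rests on the tight statistical correlation between these features, a correlation the attack was trained to exploit.

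Next, I would invoke Proposition~\ref{props1} to quantify the corruption. In $(W_k)_{mix}$, only the coordinates where the shared mask $m$ is $0$ carry $k$'s original parameters; the remaining coordinates carry $j$'s parameters, and the server cannot tell the two sets apart because $m=PRNG(g^{ab}\bmod p)$ is jointly and privately derived by $k$ and $j$ via Diffie--Hellman and, by Proposition~\ref{props0}, is unknown to the server. Consequently, for any attacker-chosen subset of $u$ coordinates that the MIA classifier would rely on, the probability that all $u$ of them actually originate from $k$ is $(1/2)^u$, which decays exponentially in $u$. For typical MIA features (gradients with respect to many parameters, full-layer activations), $u$ is large enough that the probability is negligible.

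I would then argue that the consequences are twofold. First, the feature vector the server extracts from $(W_k)_{mix}$ is an out-of-distribution input for the MIA classifier: roughly half of every feature vector is contributed by a different model trained on a different dataset, breaking the internal correlations the classifier was trained to recognize. Second, even the association between the mixed update and the identity $k$ is weakened, since $(W_k)_{mix}$ equally reflects $j$, so a positive membership prediction cannot be attributed to $k$'s training set rather than $j$'s. Combining these observations yields that the server's prediction is no better than what it could obtain on a random object unrelated to $k$'s data, which is the content of the proposition.

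The main obstacle will be making precise what ``not possible to correctly predict'' means, since in principle an adversary could still output membership guesses at chance level. My intended formulation is to state that the membership score obtained from $(W_k)_{mix}$ is statistically indistinguishable from the score obtained from a mixed update in which $k$'s contribution is replaced by an independent participant's update, so that the MIA advantage over random guessing vanishes; this is the formal handle I would use to close the argument, leaving the quantitative empirical verification to the experimental section.
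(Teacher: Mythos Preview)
Your argument is sound and reaches the same conclusion as the paper, but the route you take is genuinely different from the paper's own proof.

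The paper argues \emph{operationally}: it walks through the forward pass and observes that each intermediate activation computed from the mixed weights is $f_{k,l}(x)=\sigma_l((w_{k,l})_{mix}\cdot x + (b_{k,l})_{mix})$ rather than $\sigma_l(w_{k,l}\cdot x + b_{k,l})$, so the activations, the loss scalar, and then the backpropagated gradient are all ``random'' relative to what the MIA classifier was trained on, yielding a random membership score. Your argument is instead \emph{structural}: you invoke Proposition~\ref{props1} to bound by $(1/2)^u$ the probability that any $u$ coordinates the attacker would need all come from $k$, combine this with the server's ignorance of the mask (Proposition~\ref{props0} and the Diffie--Hellman secrecy), and conclude that the extracted feature vector is out-of-distribution for the MIA classifier and not even attributable to $k$ versus $j$.

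What your approach buys is a cleaner reuse of the earlier propositions and a quantitative handle (the $(1/2)^u$ decay) that the paper's proof does not state; you also raise the attribution-ambiguity point, which the paper does not. What the paper's approach buys is directness: it explains mechanistically \emph{why} each attack component is corrupted, without needing to talk about subsets of coordinates. Both arguments remain informal at the same place---neither rigorously pins down ``random'' or ``indistinguishable''---and your explicit acknowledgment of this gap, together with your proposed indistinguishability formulation, is if anything more candid than the paper's treatment.
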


\begin{proof}
\label{proof2}
The binary classifier needs the learned pattern (that is, the correct attack components) to correctly predict the membership score of $(x, y)$.
However, in FFL, the semi-honest server will compute an intermediate layer activation $f_{k, l}(x) = \sigma_l((w_{k, l})_{mix} \cdot x + (b_{k, l})_{mix})$ instead of $f_{k, l}(x) = \sigma_l(w_{k, l} \cdot x + b_{k, l})$. This will result in random activations and a random loss scalar as well. Based on that, calculating the gradient vector by backpropagating from a wrong loss scalar through 
the parameters of a mixed model will result in a completely random attack component, and hence in a random membership score prediction for the data point $(x, y)$. Therefore, FFL prevents the semi-honest server from correctly predicting the membership score of any target data point.
\end{proof}

\subsection{Robustness against property inference attacks}
\label{property_attacks}

Property inference attacks \cite{ganju2018property, melis2019exploiting} try to infer specific properties about the training data by recognizing patterns within a participant's local model. \cite{melis2019exploiting} show how to infer properties of a participant's training data that are uncorrelated to the main task features. The idea of the attack is to use a participant's update to infer properties that characterize a subset of her training data. A semi-honest server can use some auxiliary data, with and without the property, to generate updates with and without that property. Then, it uses the gradients of the generated updates as input features to train a binary classifier. The binary classifier is used to distinguish if an input gradient was computed on data with the same target property. Finally, when the server receives a target participant's update $W_k$, it extracts its gradient as 
\begin{equation}
    \label{eq_compute_grad}
    \nabla W_k = \frac{W - W_k}{\eta}.
\end{equation}
Then the server passes $\nabla W_k$ to the binary classifier to determine if the participant's data have the target property.

We can notice that, like the classifier in the MIA attack, the classifier in this attack  mainly depends on the original gradient vectors.
FFL prevents the server from obtaining 
the whole original gradient of the target participant. In general, her mixed gradient is inconsistent with the pattern the binary classifier was trained on. Thus, the classifier decision will most likely be inaccurate. 

\subsection{Robustness against reconstruction attacks}
\label{reconstruction_attacks}

Reconstruction attacks \cite{zhu2020deep, zhao2020idlg, geiping2020inverting} are much stronger than the previous ones, since they can extract both the original training inputs and the labels from a participant's local gradient. The idea behind these attacks is that a participant's update $W_k$ is computed based on both the global model $W$ and the participant's training data $(x_k, y_k)$. 
Since a semi-honest server has both $W$ and $W_k$, it can obtain the training data by inverting the update gradient. First, the server computes participant $k$'s gradient $\nabla W_k$ by using Expression \eqref{eq_compute_grad}. After that, it tries to invert the gradient and to find the unknown training data $(x_k, y_k)$ that result in the same extracted gradient: it starts by randomly initializing a dummy input $x^*$ and a label input $y^*$, and feeds these ``dummy data'' to the global model $W$ to get ``dummy gradients'' $\nabla W^*$ as:
\begin{equation}
    \label{eq_ra_loss}
    \nabla W^* = \nabla \mathcal{L}(W, (x^*, y^*))
\end{equation}
Then, the server repeatedly modifies the dummy data in an adversarial perturbation way, based on the difference between the dummy gradient $\nabla W^*$ and participant $k$'s original gradient $\nabla W_k$. A small distance between $\nabla W^*$ and $\nabla W_k$ means that the dummy data are similar to the original data. The authors of \cite{zhu2020deep, zhao2020idlg} use the Euclidean distance between $\nabla W_k$ and $\nabla W^*$ as the objective function to modify the dummy data, whereas \cite{geiping2020inverting} employ the cosine similarity. 
The objective function used in \cite{geiping2020inverting} is given by
\begin{equation}
    \label{eq_rag_loss}
    argmin_{x^*, y^*} \left(1 - \frac{<\nabla W_k, \nabla W^*>}{||\nabla W_k||||\nabla W^*||}\right)
\end{equation}

Based on the above we can state the following proposition.
\begin{proposition}
\label{props3}
A mixed gradient ${(\nabla W_k)}_{mixed}$, sent by a participant $k$, cannot be leveraged by the server to estimate a target data point $(x,y)$ in $k$'s local data. 
\end{proposition}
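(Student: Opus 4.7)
The plan is to mirror the structure of Proposition~\ref{props2} and the informal argument following the property inference discussion: show that the server's reconstruction objective is being fed a gradient that is not the gradient of the loss at any single training point, so the reconstructed $(x^*, y^*)$ cannot approximate a genuine training example of $k$.

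First, I would note that the only gradient available to the server is the mixed gradient $(\nabla W_k)_{mix}$, obtained by applying Expression~(\ref{extract_gradient}) to the mixed update $(W_k)_{mix}$. By the construction of Protocol EXCHANGE\_FRAGMENTS and Expression~(\ref{exp1}), this vector is a coordinate-wise combination of participants $k$'s and $j$'s gradients, with the mask $m$ selecting which coordinates come from which participant. Crucially, for any $(x,y)$ the function $(x,y)\mapsto \nabla \mathcal{L}(W,(x,y))$ does not in general produce such a half-and-half mixture of two unrelated gradients, so $(\nabla W_k)_{mix}$ lies outside the image of that map.

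Second, I would substitute $(\nabla W_k)_{mix}$ in place of $\nabla W_k$ in the reconstruction objective of Equation~(\ref{eq_rag_loss}) (and in its Euclidean-distance counterpart) and trace what the optimizer converges to. The honest gradient $\nabla \mathcal{L}(W,(x,y))$ associated with the true point $(x,y)$ disagrees with $(\nabla W_k)_{mix}$ on roughly half of the coordinates (those where $m=1$), so the objective value at the true point is far from the optimum; any $(x^*, y^*)$ achieving a smaller objective must simultaneously fit fragments from two disjoint data-generating processes and is therefore not a faithful estimate of $(x,y)$. Invoking Proposition~\ref{props1}, the probability that a specific set of $u$ coordinates of the original $\nabla W_k$ survives inside $(\nabla W_k)_{mix}$ is $(1/2)^u$, which decays exponentially and renders the optimization essentially blind to $(x,y)$ as $u$ grows.

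Third, I would argue that the server cannot restrict the attack to the subset of coordinates that actually originate from $k$. The mask $m = PRNG(g^{ab} \bmod p)$ is derived from the Diffie--Hellman shared secret of $k$ and $j$, and by the hardness of the computational Diffie--Hellman problem the server cannot recover $m$ from the transcript it observes; moreover, the reconstruction algorithms of \cite{zhu2020deep, zhao2020idlg, geiping2020inverting} are known to be highly sensitive to missing or corrupted gradient coordinates, so even a hypothetical partial view would not enable a successful attack. The main obstacle is that the statement is qualitative: a strict impossibility result would require a formal adversarial model and quantitative bounds on the reconstruction error. My plan is therefore to proceed as in Proposition~\ref{props2}, arguing that the optimizer is being fed an input structurally different from any honest gradient, so its output cannot be a meaningful estimate of $(x,y)$, while leaving a fully quantitative guarantee to the empirical evaluation.
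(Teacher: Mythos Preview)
Your proposal is correct and follows essentially the same approach as the paper: the core argument is that the mixed gradient does not arise as $\nabla\mathcal{L}(W,(x,y))$ for any single data point, so the reconstruction objective of Equation~(\ref{eq_rag_loss}) (or its Euclidean counterpart) has no meaningful minimizer tied to $k$'s data. The paper's own proof is in fact much terser---a single sentence making exactly that observation---so your additional appeals to Proposition~\ref{props1} and to the secrecy of the mask via the Diffie--Hellman assumption go beyond what the paper supplies, but they are consistent with it and strengthen the informal argument rather than diverging from it.
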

\begin{proof}
\label{proof3}
 When using a mixed gradient ${(\nabla W_k)}_{mixed}$ instead of the original $\nabla W_k$, the objective functions used in the
 reconstruction attacks will in general result in a random reconstructed data point because in general there is no original data point corresponding to that mixed gradient.
\end{proof}

Proposition~\ref{props3} guarantees that by providing the server with mixed updates instead of the original ones, FFL effectively prevents the semi-honest server from performing reconstruction attacks.

\section{Security Analysis}
\label{sec:security}
When exchanging fragments with another honest participant $k$, an attacker $j'$ can follow one of three strategies:
\begin{itemize}
    \item \textbf{Strategy 1}. Exchange her poisoned fragment with $k$ 
and send a poisoned mixed update to the server containing her other poisoned fragment and the good fragment of $k$. With an expected number of $n/5$ attackers in a training round, there is a chance of poisoning at most $2n' = 2n/5$ mixed updates and at most $n' = n/5$ coordinates.
    
    \item \textbf{Strategy 2}. Exchange her poisoned fragment with $k$ 
and send a fully poisoned mixed update to the server. 
    Thus, there is a chance of poisoning at most $2n' = 2n/5$ mixed updates and at most $2n' = 2n/5$ coordinates.

    \item \textbf{Strategy 3}. 
Exchange her poisoned fragments with the other participants and 
submit fully good updates to the server.
The attacker does this to increase his global reputation and 
discredit the honest participants in front of the server.
This strategy poisons fewer updates than Strategies 1 or 2: 
at most $n' = n/5$ mixed updates and at most $n' = n/5$ coordinates. 
\end{itemize}

Now let us see how FFL can counter the above strategies and neutralize the impact of poisoned mixed updates on the global model aggregation. 
In \emph{Strategy 1}, since the number of untouched good coordinates is a majority ($4n/5$), the poisoned mixed updates will have less similarity to the centroid of the mixed update. 
This lower similarity will decrease the global reputations of both attackers and some honest participants, and the local reputations of the attackers.
But since an honest participant is more likely to select another honest participant for the exchange, honest participants will find an opportunity to increase their global reputations and offset the harm caused by the attackers' poisoned fragments.
That is because the probability of an honest participant selecting another honest participant is $(n - n' - 1)/(n-1) = (4n - 5)/(5n - 5)$, whereas the probability of selecting an attacker is $(n')/(n-1) = n/(5n - 5)$. 
Moreover, as the training evolves, the attackers will obtain smaller and smaller local reputations (less than the first quartile in the local reputation vectors). As a result, honest participants will not accept exchanging fragments with them. 
This will force attackers to send their poisoned updates directly 
to the server or keep them. 
If they send them, their global reputations will decrease more and more to be below the first quartile $Q1_{\gamma}$, which will completely neutralize their influence on the global model, because they will not be selected for future training. 
On the other hand, if they refrain from sending their poisoned updates, they will neutralize themselves. 
Note that, even if some attackers managed to have global reputations slightly greater than the first quartile in the early training rounds, they would have less influence on the global model aggregation than honest participants, because they will have small trust scores.
Fig.~\ref{trust_evol} shows an example of how the average trusts of honest participants and attackers evolve as the training evolves when the attackers 
follow Strategy 1.
\begin{figure}[!ht]
    \centering
      \includegraphics[width=0.8\linewidth]{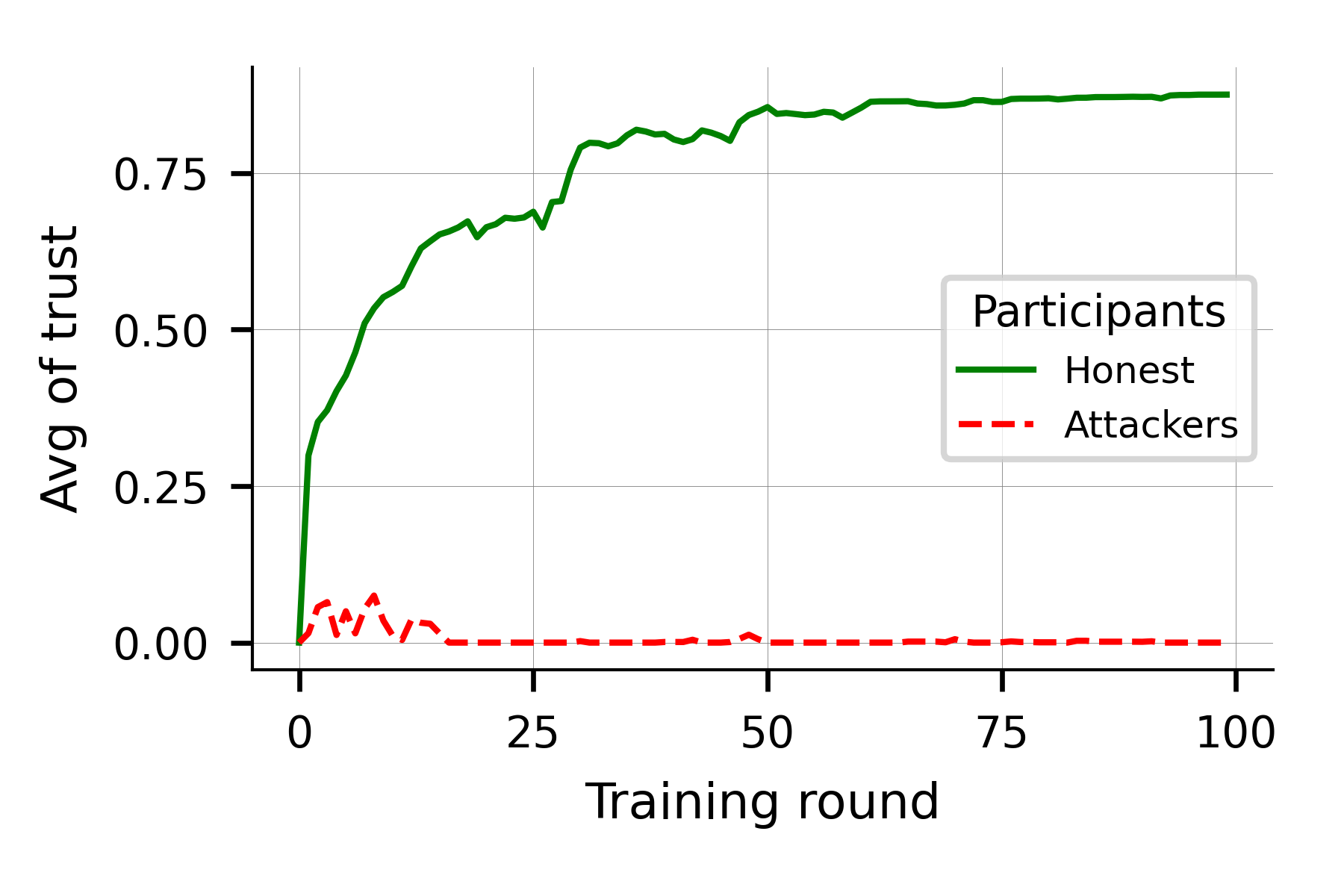}
      \caption{Participant average trust evolution during training in the CIFAR10-VGG16 benchmark.}
      \label{trust_evol}
\end{figure}
In \emph{Strategy 2}, the situation does not differ much. The attackers will get still lower similarity values than under Strategy 1, 
because they send fully poisoned updates,
 which will be farther from the centroid than the poisoned mixed updates sent 
by the honest participants. 
This will be reflected in their global and local reputations, which will lead to attackers getting neutralized.
As for \emph{Strategy 3}, the attackers will get good global reputations because they always send good mixed updates; however, they will get 
low local reputations with honest participants. 
This will deter honest participants from making future exchanges with the attackers. Hence, the attackers will end up being unable 
to poison the mixed updates sent by honest participants.

\section{Experimental Analysis}
\label{sec:setup}

In this section, we report empirical results on three real data sets for the most relevant security and privacy attacks discussed above.  
We used the PyTorch framework to implement the experiments on a computer with an AMD Ryzen 5 3600 6-core CPU, 32 GB RAM, an NVIDIA GTX 1660 GPU with 6 GB RAM, and Windows 10 OS. 
As said above, with PyTorch, the parameter bitlength is $\lambda=32$.  
Our code and data are available for reproducibility purposes\footnote{\url{https://github.com/anonymized30/FFL}}.

In all the experiments, the {\em Acceptor}, respectively the {\em Initiator}, looped through all the layers of her update and generated a random binary $mask_{l}$ for each layer $l\in [1, L]$, where $L$ is the total number of layers in the global model $W$. At the end, the {\em Acceptor}, resp. the {\em Initiator}, set
$mask=mask_{l}||\ldots||mask_L$, where $||$ is the concatenation operator, and used $mask$ to exchange a random fragment.

We used the Diffie-Hellman key exchange protocol with the secure 2048-bit MODP group~\cite{rfc5114} to generate the secret shared seeds of the fragments' masks, and we used~\cite{pycryptodome} to encrypt and decrypt the OTP seeds with the recommended 3072-bit key size.
We used a value of $\alpha = 0.2$ because we found it gives better simultaneous detection of untargeted and targeted attacks. 
That is because targeted attacks are stealthier than untargeted ones.

\subsection{Data sets and models}
We tested the proposed method on three ML tasks: tabular data classification, image classification and sentiment analysis. Table~\ref{tab:datasets_models} summarizes the data sets and models we used. 
To evaluate the effectiveness of FFL at defeating reconstruction attacks, we used the code provided by the authors of \cite{geiping2020inverting}, who perform reconstruction attacks with the \emph{ConvNet64} model described in their paper (with about 3 million parameters) and some images from the \emph{CIFAR10} validation set.

\begin{table}[ht]
\centering
\scriptsize
\caption{Data sets and models used in the experiments}
\label{tab:datasets_models}
\resizebox{0.5\textwidth}{!}{%
\begin{tabular}{|l|c|c|c|c|}
\hline
\multicolumn{1}{|c|}{Task}      & Data set & \# Examples & Model  & \# Parameters \\ \hline
Tabular classif.                & Adult    & 48,842      & MLP    & $\sim$5K            \\ \hline
\multirow{2}{*}{Image classif.} & MNIST    & 70K         & CNN    & $\sim$22K           \\ \cline{2-5} 
                                & CIFAR10  & 60K         & VGG16  & $\sim$15M           \\ \hline
Sent. analysis                  & IMDB     & 50K         & BiLSTM & $\sim$13M           \\ \hline
\end{tabular}%
}
\end{table}

\textbf{Tabular data classification.} We used the Adult tabular data set\footnote{\url{https://archive.ics.uci.edu/ml/datasets/adult}} that contains $48,842$ records of census income information with 14 numerical and categorical attributes. 
The class label is the attribute \emph{income} that classifies records into either $> 50$K or $\leq 50$K.
We used $80\%$ of the data as training data and the remaining $20\%$ as validation data.
We randomly and uniformly split the $80\%$ training examples among $20$ FL participants. 
We used a multi-layer perceptron (MLP) with one input layer, one hidden layer and one output layer that contains about $5$K learnable parameters.
The output layer is followed by a Sigmoid function to produce the final predicted class for an input record. 
The MLP was trained during $100$ rounds. In each round, the FL server selected $10$ participants and asked them to train the model for $1$ local epoch and a local batch size $64$. 
The participants used the binary cross-entropy with logit loss function and the Adam optimizer with a learning rate = $0.001$ to train their models.

\textbf{Image classification.} We used two data sets for this task:
\begin{itemize}
\item MNIST data set. It contains $70$K handwritten digit images from $0$ to $9$ ({\em i.e.}, $10$ classes) \cite{lecun1999object}. 
The images are in grayscale with size $28\times28$ pixels, and they are divided into a training set ($60$K examples) and a testing set ($10$K examples). 
We randomly and uniformly split the $60$K training examples among $100$ simulated participants of an FL setting. 
We used a two-layer convolutional neural network (CNN) with two fully connected layers. 
The CNN model was trained during $200$ rounds. In each round, the FL server randomly chose $50$ participants and asked them to train the model for $3$ local epochs and a local batch size $64$. 
The participants used the cross-entropy loss function and the stochastic gradient descent (SGD) optimizer with a learning rate = $0.001$ and momentum = $0.9$ to train their models.

\item CIFAR10 data set. It consists of $60$K colored images of $10$ different classes \cite{krizhevsky2009learning}. 
The data set is divided into $50$K training examples and $10$K testing examples. 
We randomly and uniformly split the $50$K training examples among $20$ FL participants.
We used the VGG16 CNN model with one fully connected layer \cite{simonyan2014very}. 
The VGG16 model was trained during $100$ rounds. 
In each round, the FL server randomly chose $10$ participants and asked them to train the model for $3$ local epochs and a local batch size $32$. 
The participants used the cross-entropy loss function and the SGD optimizer with a learning rate = $0.01$ and momentum = $0.9$ to train their models.
\end{itemize}

{\bf Sentiment analysis.} We used the IMDB Large Movie Review data set \cite{maas2011learning} for this binary sentiment classification task. 
This data set is a collection of $50$K movie reviews and their corresponding sentiment binary labels (either positive or negative). 
We divided the data set into $40$K training examples and $10$K testing examples. 
We randomly and uniformly split the $40$K training examples among $20$ FL participants. 
We used a Bidirectional Long/Short-Term Memory (BiLSTM) model, which has an embedding layer of $100$ dimensions for each token. 
The model ends with a linear layer followed by a Sigmoid function to produce the final predicted sentiment for an input review. 
The BiLSTM was trained during $100$ rounds. In each 
round, the FL server randomly chose $10$ participants and asked them to train the model for $1$ local epoch and a local batch size $32$. 
The participants used the binary cross-entropy with logit loss function and the Adam optimizer with a learning rate = $0.001$ to train their models.

\textbf{Evaluation metrics.}
\label{metrics} We used the following evaluation metrics on the test set examples to assess the impact of the attacks on the learned model and the performance of the proposed framework w.r.t. the state of the art:
\begin{itemize}
\item \emph{Test error (TE)}. This is the error resulting from the loss functions used in training. The lower the test error, the more robust the method is against the attack.
\item \emph{Overall accuracy (All-Acc)}. This is the number of correct predictions divided by the total number of predictions. 
\item \emph{Source class accuracy (Src-Acc)}. We evaluated the 
accuracy for the subset of test examples belonging to the source class. One may achieve a good overall accuracy while degrading the accuracy of the source class. 
\item \emph{Attack success rate (ASR)}. This is the proportion of targeted examples (with the source label) that are incorrectly classified into the label desired by the attacker.
\end{itemize}
An effective defense needs to retain the benign performance of the global model on the main task while reducing ASR. 

\subsection{Robustness against security attacks}

\label{exp:robustness}
We next report results on the robustness of FFL against two security
attack strategies: in {\em Strategy 1}, the attacker generates two poisoned fragments, one of which he exchanges with 
another participant $k$, and he sends to the server a mixed update
consisting of his other poisoned fragment and the good fragment of $k$; in {\em Strategy 2}, the attacker exchanges a poisoned fragment with $k$ and sends a poisoned update to the server. 

Also, we compare the performance of FFL with the performance of FedAvg\cite{mcmahan2017communication}, the median~\cite{yin2018byzantine}, the trimmed mean~\cite{yin2018byzantine}, multi-Krum~\cite{blanchard2017machine}, and FoolsGold~\cite{fung2020limitations} on standard FL. 
Notice that FedAvg does nothing to counter security attacks (it systematically aggregates all received updates).

We evaluated TE and All-Acc under the Gaussian noise untargeted attack, and TE, Src-Acc and ASR under the label-flipping targeted attack. 
We used standard FL with FedAvg when no attacks were performed as a baseline to show the impact of attacks on the model performance. 
In our experiments, the percentage of attackers was $20\%$ for all four benchmarks.

\textbf{Gaussian noise attack.} The attackers added Gaussian noise to their updates to prevent the model from converging \cite{li2019rsa, wu2020federated}. Specifically, they added noise with $0$ mean and $0.5$ standard deviation for the MLP and CNN model parameters, and $0.2$ standard deviation for VGG16 and BiLSTM parameters.
\begin{table*}[t]
\scriptsize
\centering
\caption{Robustness against Gaussian noise attacks. Best scores are in bold.}
\label{tab:robustness_gn}
\begin{tabular}{|cl|cc|cc|cc|cc|}
\hline
\multicolumn{2}{|c|}{\multirow{2}{*}{\textbf{\begin{tabular}[c]{@{}c@{}}Benchmark/ \\ Method\end{tabular}}}} & \multicolumn{2}{c|}{Adult-MLP}          & \multicolumn{2}{c|}{MNIST-CNN}         & \multicolumn{2}{c|}{CIFAR10-VGG16}      & \multicolumn{2}{c|}{IMDB-BiLSTM}       \\ \cline{3-10} 
\multicolumn{2}{|c|}{}                                                                                       & \multicolumn{1}{c|}{TE}     & All-Acc\% & \multicolumn{1}{c|}{TE}    & All-Acc\% & \multicolumn{1}{c|}{TE}     & All-Acc\% & \multicolumn{1}{c|}{TE}    & All-Acc\% \\ \hline
\multicolumn{1}{|c|}{\multirow{6}{*}{FL}}                        & FedAvg (no attacks)                       & \multicolumn{1}{c|}{0.349}  & 82.56     & \multicolumn{1}{c|}{0.112} & 96.69     & \multicolumn{1}{c|}{0.881}  & 80.77     & \multicolumn{1}{c|}{0.475} & 88.63     \\ \cline{2-10} 
\multicolumn{1}{|c|}{}                                           & FedAvg                                    & \multicolumn{1}{c|}{1.198}  & 75.08     & \multicolumn{1}{c|}{2.322} & 9.65      & \multicolumn{1}{c|}{10.324} & 10.0      & \multicolumn{1}{c|}{0.618} & 67.7      \\ \cline{2-10} 
\multicolumn{1}{|c|}{}                                           & Median                                    & \multicolumn{1}{c|}{0.349}  & \textbf{82.87}     & \multicolumn{1}{c|}{0.115} & 96.62     & \multicolumn{1}{c|}{1.020}  & 78.91     & \multicolumn{1}{c|}{0.524} & 88.33     \\ \cline{2-10} 
\multicolumn{1}{|c|}{}                                           & Trimmed mean                              & \multicolumn{1}{c|}{0.350}  & 82.61     & \multicolumn{1}{c|}{0.114} & 96.64     & \multicolumn{1}{c|}{1.046}  & \textbf{80.23}     & \multicolumn{1}{c|}{\textbf{0.457}} & \textbf{88.79}     \\ \cline{2-10} 
\multicolumn{1}{|c|}{}                                           & Multi-Krum                                & \multicolumn{1}{c|}{0.350}  & 82.69     & \multicolumn{1}{c|}{0.126} & 96.18     & \multicolumn{1}{c|}{0.998}  & 78.86     & \multicolumn{1}{c|}{0.565} & 87.72     \\ \cline{2-10} 
\multicolumn{1}{|c|}{}                                           & FoolsGold                                 & \multicolumn{1}{c|}{44.051} & 69.47     & \multicolumn{1}{c|}{2.741} & 8.92      & \multicolumn{1}{c|}{12.813} & 9.02      & \multicolumn{1}{c|}{2.694} & 51.61     \\ \hline
\multicolumn{1}{|c|}{\multirow{2}{*}{FFL}}                       & Strategy 1                                & \multicolumn{1}{c|}{\textbf{0.349}}  & 82.84     & \multicolumn{1}{c|}{\textbf{0.112}} & \textbf{96.67}     & \multicolumn{1}{c|}{\textbf{0.931}}  & 79.74     & \multicolumn{1}{c|}{0.536} & 88.15     \\ \cline{2-10} 
\multicolumn{1}{|c|}{}                                           & Strategy 2                                & \multicolumn{1}{c|}{0.350}  & 82.86     & \multicolumn{1}{c|}{0.113} & 96.61     & \multicolumn{1}{c|}{0.938}  & 79.26     & \multicolumn{1}{c|}{0.542} & 87.48     \\ \hline
\end{tabular}
\end{table*}

Table~\ref{tab:robustness_gn} shows the results under this attack.
First, we can see the significant negative impact of the attack on the performance of FedAvg regarding both the test error and the overall accuracy. 
The case was even worse with FoolsGold because the added noise made the attackers' last layers more diverse than the honest participants'.
FooldGold assumes participants with similar last layers to be attackers and those with diverse last layers to be honest.
Thus, it considered poisoned updates and excluded good updates in the model aggregation. 
The rest of the methods, including FFL, achieved comparable results to the baseline.
Since the added noise made the magnitudes of the poisoned updates different from those of good updates, we observe that i) the median and the trimmed mean were able to neutralize the poisoned parameters in model aggregation, ii) multi-Krum was able to exclude the poisoned updates due to the larger Euclidean distances they had, and iii) FFL was able to exclude poisoned mixed updates because their deviations from their medians were larger than those of good mixed updates.
We can also see that, in most cases (Adult-MLP, MNIST-CNN and CIFAR10-VGG16), FFL achieved the lowest test error among all methods.
As the training evolved, FFL set the trust values of honest participants to $1$ and thus fully considered their contributions. 
Note that FFL achieved similar performance under attack strategies 1 and 2.

\textbf{Label-flipping attack.} In the label-flipping attack, attackers flip the labels of correct training examples from one class (a.k.a. the source class) to another class and train their models according to the latter~\cite{biggio2012poisoning,fung2018mitigating}. 
For Adult, the attackers flipped each example with the label ''$>50$K'' to ''$\leq 50$K'', while they flipped each example with the label ''$7$'' to ''$1$'' for MNIST.
For CIFAR10, the attackers flipped each example with the label ''Cat'' to ''Dog''.
For IMBD, they flipped the ''positive'' reviews to ''negative''.

\begin{table*}[t]
\centering
\scriptsize
\caption{Robustness against label-flipping attacks. Best scores are in bold.}
\label{tab:robustness_lf}
\begin{tabular}{|cl|ccc|ccc|ccc|ccc|}
\hline
\multicolumn{2}{|c|}{\multirow{2}{*}{\textbf{\begin{tabular}[c]{@{}c@{}}Benchmark/ \\ Method\end{tabular}}}} & \multicolumn{3}{c|}{Adult-MLP}                                      & \multicolumn{3}{c|}{MNIST-CNN}                                      & \multicolumn{3}{c|}{CIFAR10-VGG16}                                  & \multicolumn{3}{c|}{IMDB-BiLSTM}                                    \\ \cline{3-14} 
\multicolumn{2}{|c|}{}                                                                                       & \multicolumn{1}{c|}{TE}    & \multicolumn{1}{c|}{Src-Acc\%} & ASR\% & \multicolumn{1}{c|}{TE}    & \multicolumn{1}{c|}{Src-Acc\%} & ASR\% & \multicolumn{1}{c|}{TE}    & \multicolumn{1}{c|}{Src-Acc\%} & ASR\% & \multicolumn{1}{c|}{TE}    & \multicolumn{1}{c|}{Src-Acc\%} & ASR\% \\ \hline
\multicolumn{1}{|c|}{\multirow{6}{*}{FL}}                        & FedAvg (no attacks)                       & \multicolumn{1}{c|}{0.350} & \multicolumn{1}{c|}{39.06}      & 60.94 & \multicolumn{1}{c|}{0.112} & \multicolumn{1}{c|}{95.43}     & 0.49  & \multicolumn{1}{c|}{0.881} & \multicolumn{1}{c|}{70.73}      & 14.51  & \multicolumn{1}{c|}{0.519} & \multicolumn{1}{c|}{87.16}     & 12.84 \\ \cline{2-14} 
\multicolumn{1}{|c|}{}                                           & FedAvg                                    & \multicolumn{1}{c|}{0.374} & \multicolumn{1}{c|}{10.98}      & 89.02 & \multicolumn{1}{c|}{0.130} & \multicolumn{1}{c|}{93.39}     & 1.26  & \multicolumn{1}{c|}{0.913} & \multicolumn{1}{c|}{53.82}      & 29.34  & \multicolumn{1}{c|}{0.562} & \multicolumn{1}{c|}{63.61}     & 36.39 \\ \cline{2-14} 
\multicolumn{1}{|c|}{}                                           & Median                                    & \multicolumn{1}{c|}{0.354} & \multicolumn{1}{c|}{31.28}      & 68.72  & \multicolumn{1}{c|}{0.117} & \multicolumn{1}{c|}{93.48}     & 1.07  & \multicolumn{1}{c|}{0.943} & \multicolumn{1}{c|}{58.01}      & 24.81  & \multicolumn{1}{c|}{0.643} & \multicolumn{1}{c|}{59.92}     & 40.08 \\ \cline{2-14} 
\multicolumn{1}{|c|}{}                                           & Trimmed mean                              & \multicolumn{1}{c|}{0.353} & \multicolumn{1}{c|}{32.12}     & 67.88 & \multicolumn{1}{c|}{0.118} & \multicolumn{1}{c|}{93.58}     & 1.07  & \multicolumn{1}{c|}{0.943} & \multicolumn{1}{c|}{58.43}      & 25.54  & \multicolumn{1}{c|}{0.649} & \multicolumn{1}{c|}{58.20}      & 41.80  \\ \cline{2-14} 
\multicolumn{1}{|c|}{}                                           & Multi-Krum                                & \multicolumn{1}{c|}{\textbf{0.350}} & \multicolumn{1}{c|}{39.34}     & 60.66 & \multicolumn{1}{c|}{0.113} & \multicolumn{1}{c|}{\textbf{95.33}}     & \textbf{0.39}  & \multicolumn{1}{c|}{0.896} & \multicolumn{1}{c|}{42.60}      & 39.33  & \multicolumn{1}{c|}{0.886} & \multicolumn{1}{c|}{39.05}     & 60.95 \\ \cline{2-14} 
\multicolumn{1}{|c|}{}                                           & FoolsGold                                 & \multicolumn{1}{c|}{0.351} & \multicolumn{1}{c|}{\textbf{39.72}}     & \textbf{60.28} & \multicolumn{1}{c|}{\textbf{0.111}} & \multicolumn{1}{c|}{95.14}     & \textbf{0.39}  & \multicolumn{1}{c|}{0.989} & \multicolumn{1}{c|}{67.41}      & 16.62  & \multicolumn{1}{c|}{0.549} & \multicolumn{1}{c|}{\textbf{86.86}}     & \textbf{13.14} \\ \hline
\multicolumn{1}{|c|}{\multirow{2}{*}{FFL}}                       & Strategy 1                                & \multicolumn{1}{c|}{\textbf{0.350}} & \multicolumn{1}{c|}{39.66}     & 60.34 & \multicolumn{1}{c|}{\textbf{0.111}} & \multicolumn{1}{c|}{\textbf{95.33}}     & \textbf{0.39 } & \multicolumn{1}{c|}{\textbf{0.849}} & \multicolumn{1}{c|}{\textbf{68.90}}      & \textbf{13.20}  & \multicolumn{1}{c|}{0.544} & \multicolumn{1}{c|}{86.64}     & 13.36 \\ \cline{2-14} 
\multicolumn{1}{|c|}{}                                           & Strategy 2                                & \multicolumn{1}{c|}{\textbf{0.350}} & \multicolumn{1}{c|}{39.50}      & 60.50  & \multicolumn{1}{c|}{0.112} & \multicolumn{1}{c|}{95.04}     & 0.49  & \multicolumn{1}{c|}{0.892} & \multicolumn{1}{c|}{68.80}      & 15.00  & \multicolumn{1}{c|}{\textbf{0.524}} & \multicolumn{1}{c|}{86.16}     & 13.84 \\ \hline
\end{tabular}
\end{table*}

Table~\ref{tab:robustness_lf} shows the results under the label-flipping attack.
For the Adult-MLP benchmark, the FedAvg performance under the attack significantly degraded for Src-Acc and ASR.
However, TE slightly increased and kept close to that of the baseline (FedAVg - no attacks). The reason for that is the Adult data set is highly imbalanced, with a skew toward the ''$\leq 50$K'' class label. The median and the trimmed mean achieved low TE, but saw degraded performance for Src-Acc and ASR because they discarded a large number of coordinates in the global model aggregation. Multi-Krum, FoolsGold and FFL achieved similar results to the baseline with slightly greater Src-ACC and slightly lower ASR.
Since the the Adult data set's training data were randomly and uniformly distributed among the participants, some of them had larger percentages of the target class examples compared to the original class distribution of the data set. The original percentage of the examples belonging to the target class ''$\leq 50$K'' in the data set is about $75.22\%$, which caused bias in the global model against the minority class ''$>50$K''. This made the updates of those honest participants having higher percentages of the target class close to the attackers' updates.
Since multi-Krum, FoolsGold and FFL excluded or penalized some honest participants with updates close to the attackers' updates, the global model became less biased against the minority class ''$>50$K'' and, hence, the source class accuracy slightly increased.
For MNIST-CNN, the performance of FedAvg, median and trimmed mean slightly degraded compared to the baseline.
An interesting note is that FedAvg was not highly affected by the attack. The reasons for that are the small size of the model and the simple and balanced distribution of the data set. That was also observed in~\cite{shejwalkar2021back}, where the authors argued that, in some cases, FedAvg could be robust against poisoning attacks.
For CIFAR-VGG16, Src-Acc degraded from $70.73\%$ to $53.82\%$, and ASR increased from $14.51\%$ to $29.34\%$ with FedAvg.
On the other hand, FedAvg achieved TE lower than that of the median, the trimmed median and FoolsGold.
That is because the attackers flipped the labels in the examples for only one class and kept the labels of the other classes unchanged.
The median and the trimmed mean decreased Src-Acc and increased ASR because of the large size of the VGG16 model. \cite{chang2019cronus} have shown that the estimation errors of the median and the trimmed mean scale up with the size of the model in a square-root manner.
Multi-Krum achieved the worst performance regarding Src-Acc and ASR because the small impact of the attack was not detectable in the large model.
Therefore, multi-Krum identified some attackers as honest while identifying 
some honest participants as attackers, which led to its poor performance.
FFL achieved the best performance among all the methods for all three metrics. 
FoolsGold scored the second-best after FFL regarding Src-Acc and ASR.
FFL and FoolsGold achieved such good performance because they analyzed the 
last-layer gradients, which contain more useful information for detecting the behavior of targeted poisoning attacks.
However, FoolsGold achieved the greatest TE among the other methods because it did not consider the full contributions of the honest participants.
FFL, however, considered almost all the full contributions of the honest participants and thus achieved the lowest TE. 

For the IMDB-BiLSTM benchmark, FFL and FoolsGold achieved the 
best performance among all methods, most of which were negatively 
impacted by the large size of the BiLSTM model.
FFL and FoolsGold outperformed the other methods by a large margin in
achieving good values for all metrics.
FoolsGold performed well in this benchmark because it was its ideal setting: updates from honest participants were somewhat different due to the different reviews they gave, whereas updates for attackers became very close to each other because they shared the same objective. 

To summarize, the results show that FFL can effectively defend against untargeted and targeted poisoning attacks while preserving the benign model performance. Moreover, FFL outperforms the state-of-the-art defenses in achieving good 
model performance while preventing the attackers from mounting successful security attacks and hindering the semi-honest server from mounting privacy attacks. 

\subsection{Protection against the reconstruction privacy attack}
\label{exp:rec_attack}

We next report results on the protection offered by FFL against the most powerful privacy attack, namely the reconstruction attack proposed in \cite{geiping2020inverting}. Notice that this attack does not require auxiliary data and 
can estimate the private training data by inverting their corresponding gradients. 

Fig.~\ref{fig:one_image_construction} shows the results when two participants, $k$ and $j$, sent their updates computed on 
just their private images (left column of the figure) to the server. In the FL setting (middle column of the figure), the server was able to reconstruct their private images with high accuracy. However, when they mixed fragments of their updates before sending them (FFL setting, right column of the figure), the server was only able to obtain noise instead of the original images.
\begin{figure}[htbp]
    \centering
    \begin{subfigure}{0.15\textwidth}
      \centering
      \includegraphics[width=\linewidth]{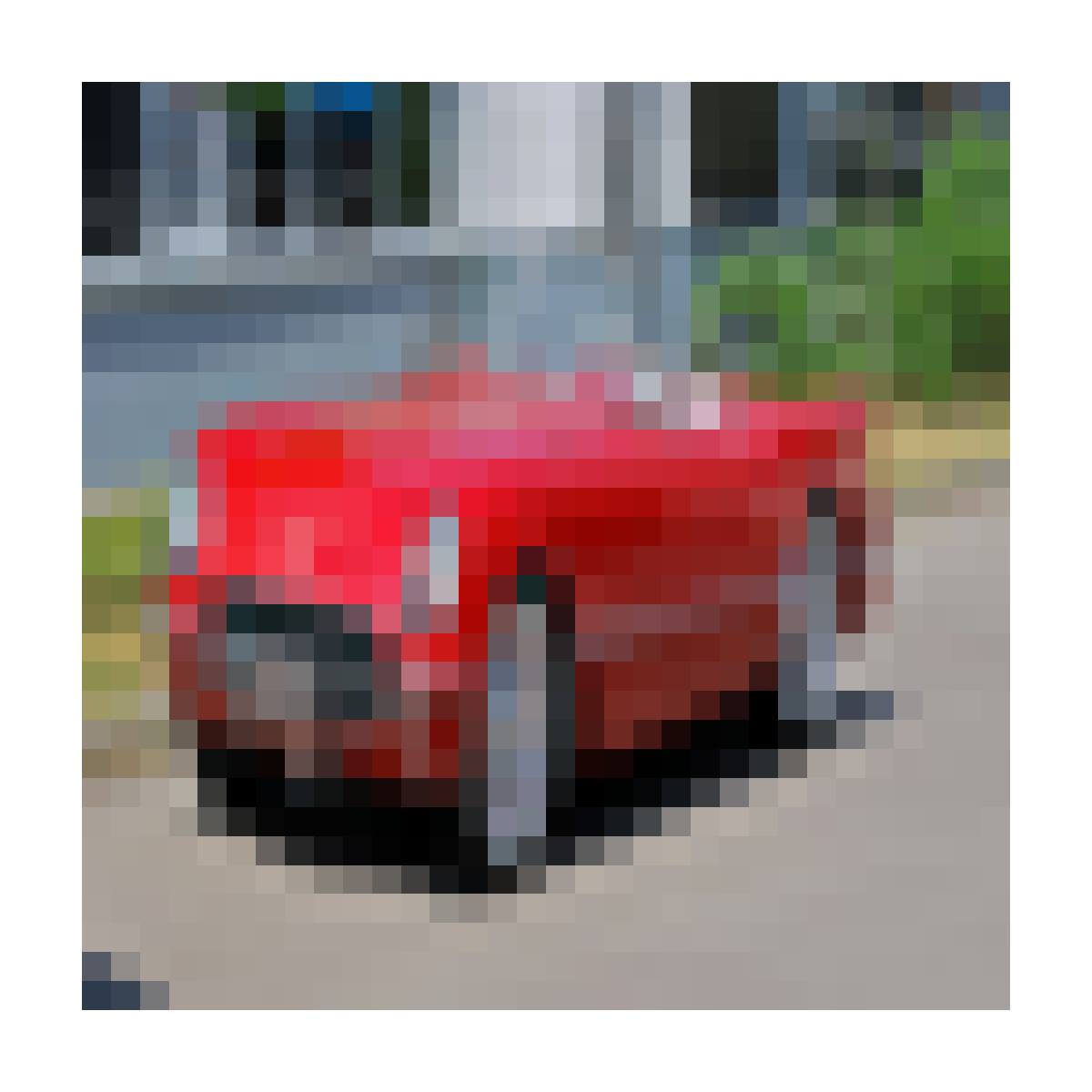}
    \end{subfigure}%
    \begin{subfigure}{0.15\textwidth}
      \centering
      \includegraphics[width=\linewidth]{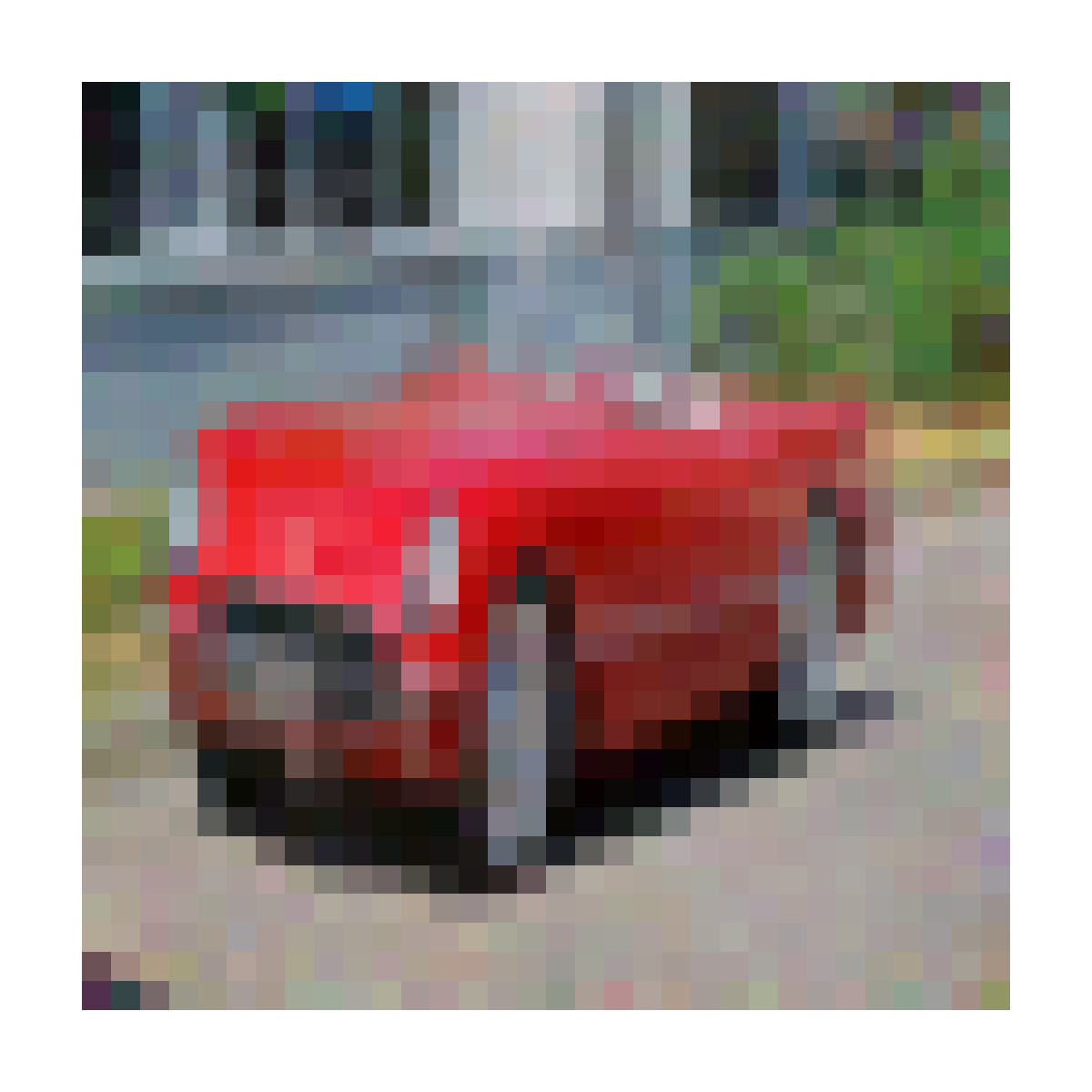}
    \end{subfigure}%
    \begin{subfigure}{0.15\textwidth}
      \centering
      \includegraphics[width=\linewidth]{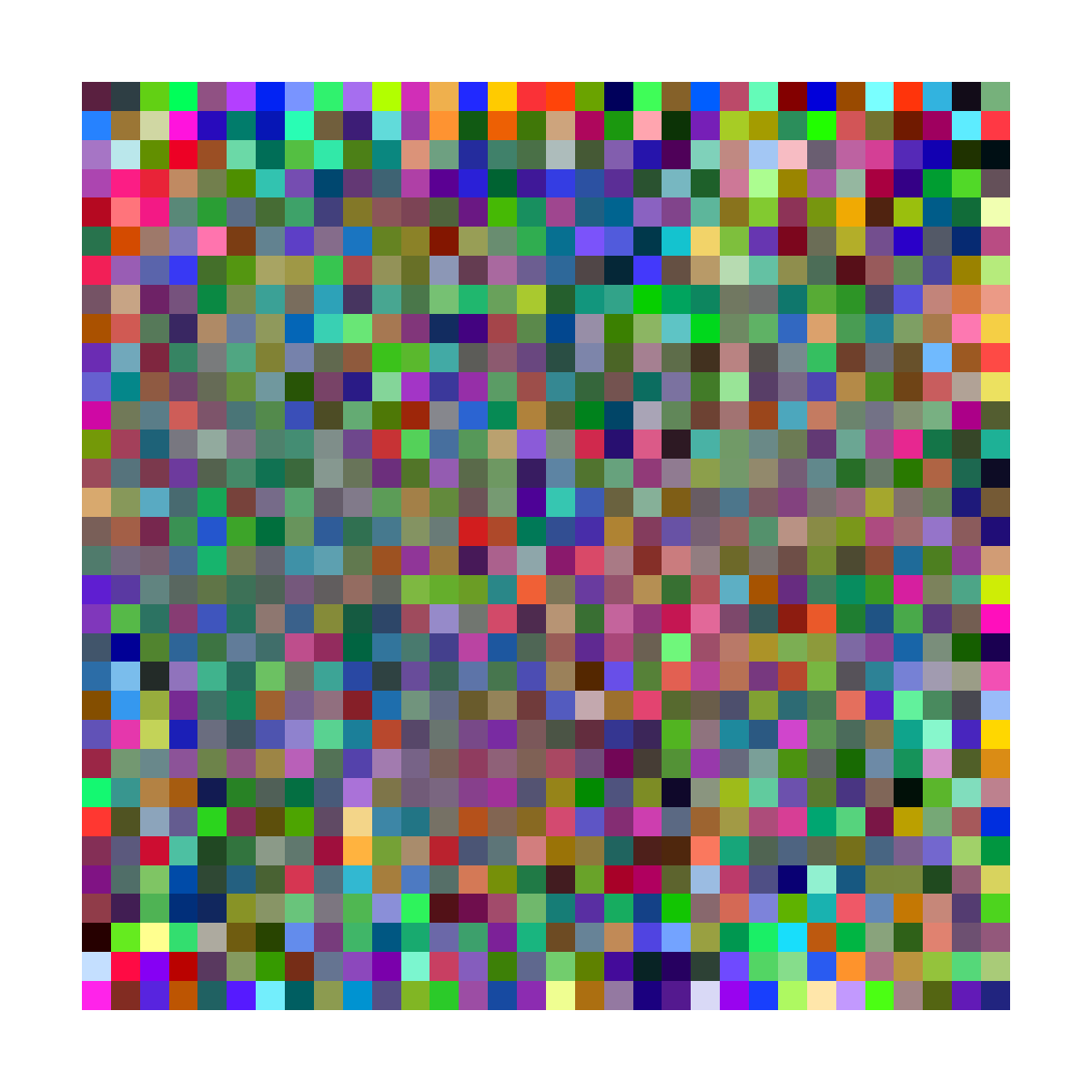}
    \end{subfigure}%
    \vspace{0.1\baselineskip}
    \begin{subfigure}{0.15\textwidth}
      \centering
      \includegraphics[width=\linewidth]{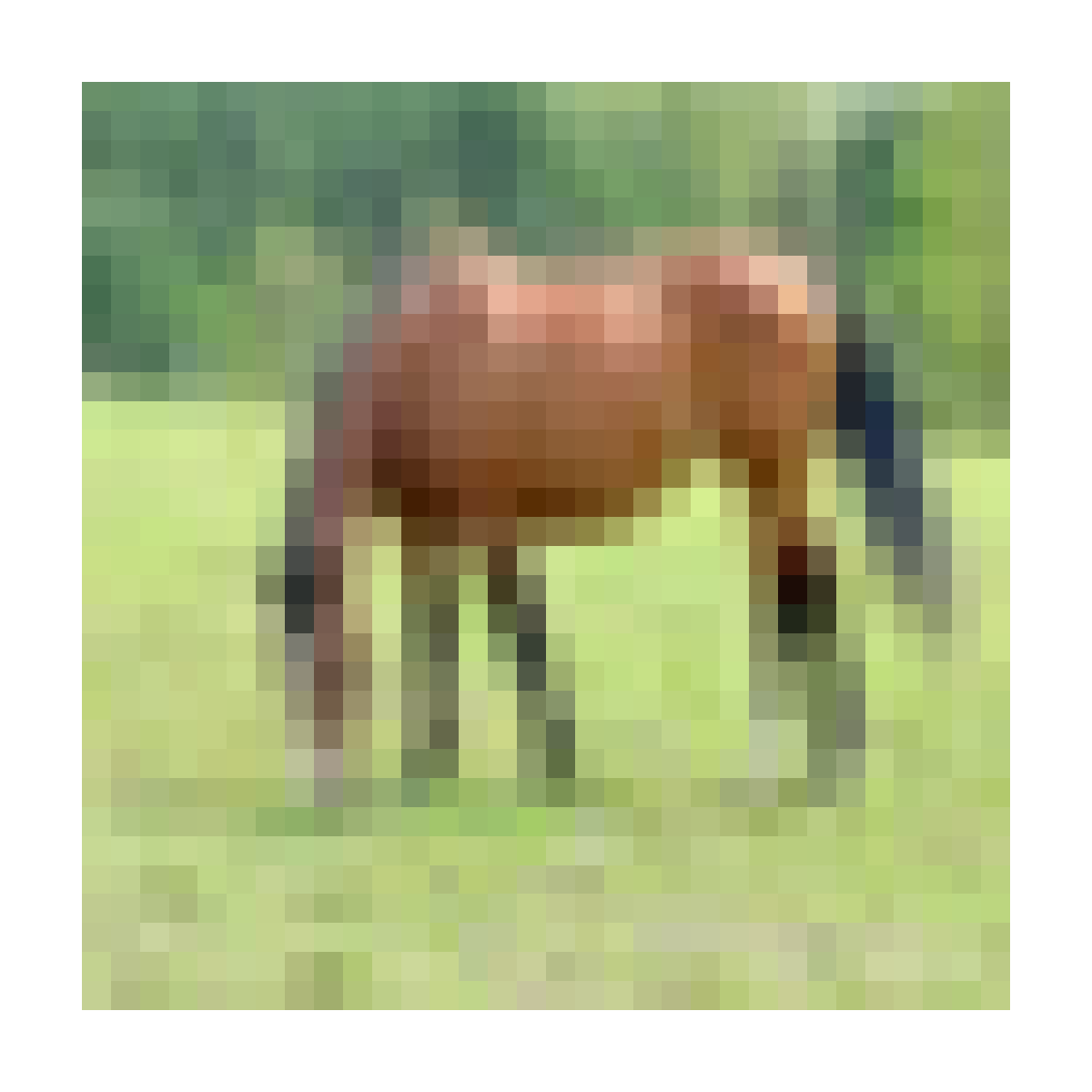}
    \end{subfigure}%
    \begin{subfigure}{0.15\textwidth}
      \centering
      \includegraphics[width=\linewidth]{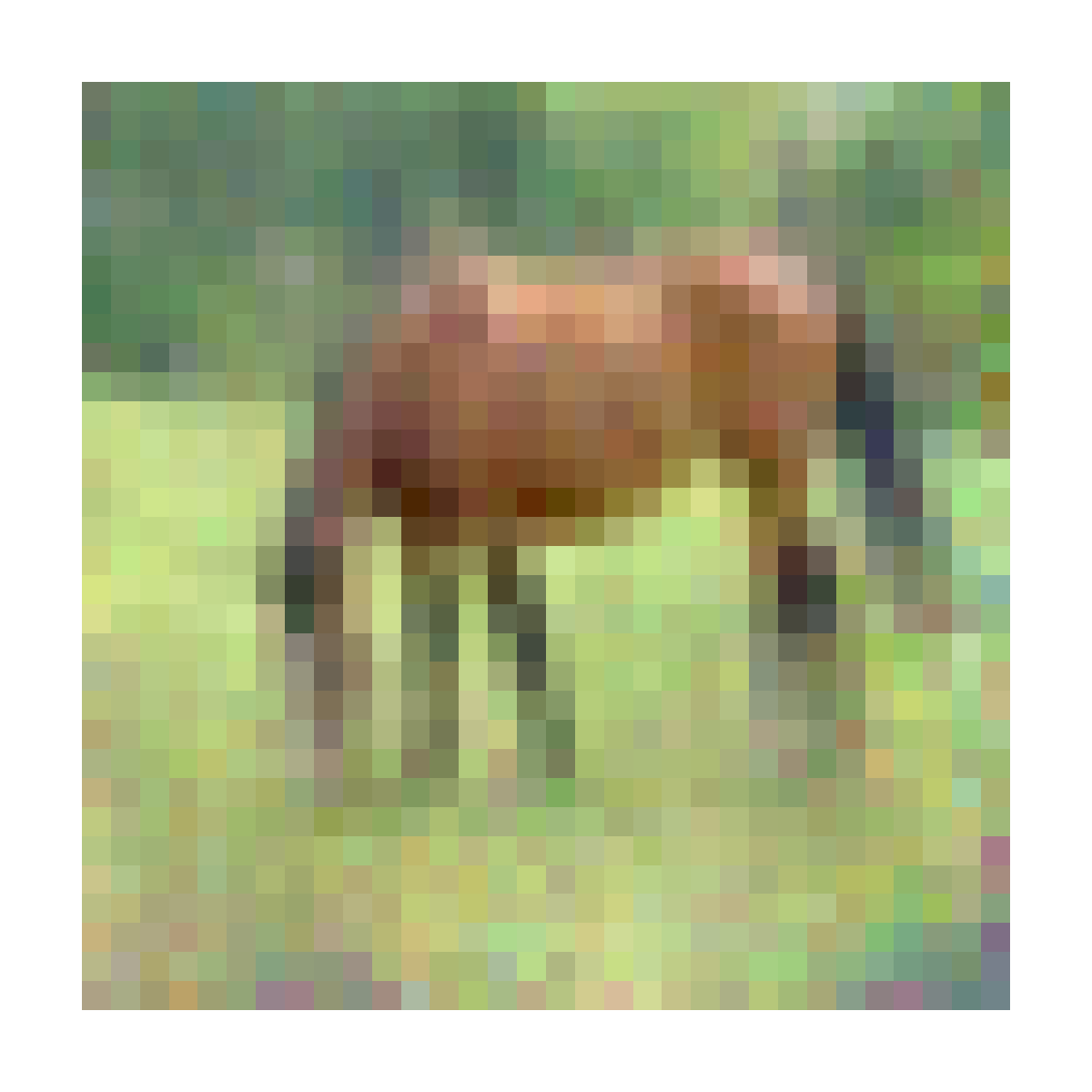}
    \end{subfigure}%
    \begin{subfigure}{0.15\textwidth}
      \centering
      \includegraphics[width=\linewidth]{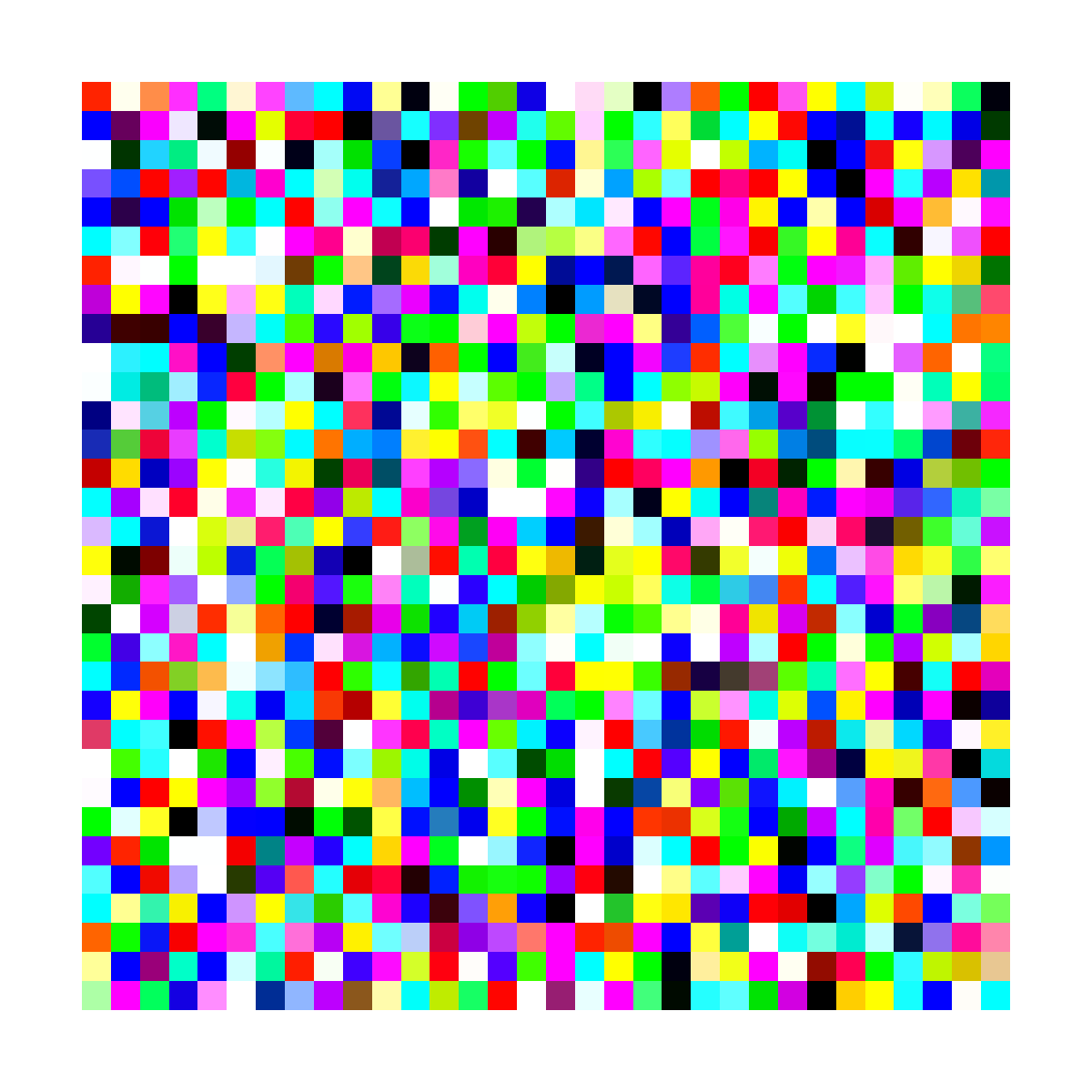}
    \end{subfigure}%
    \caption{Reconstruction of two input images from the gradients of two participants.  Left: Two input images. Middle: Reconstruction from original gradients (FL). Right: Reconstruction from mixed gradients (FFL).}
    \label{fig:one_image_construction}
\end{figure}

Similarly, Fig.~\ref{fig:batch_construction_peerkj} shows the results when two participants $k$ and $j$ sent their updates computed on a batch of $8$ images to the server. The figure exemplifies on two different batches of input images; the first example is given in the three upper rows and the second one in the three lower rows. In the FL setting, the server was able to recover a lot of information about the participants' training data. However, when they mixed their updates before sending them to the server (FFL setting), the latter just got a totally random batch of images.

\begin{figure}[htbp]
    \centering
    \begin{subfigure}{0.5\textwidth}
      \centering
      \includegraphics[width=\linewidth]{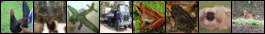}
    \end{subfigure}%
    \vspace{0.1\baselineskip}
    \begin{subfigure}{0.5\textwidth}
      \centering
      \includegraphics[width=\linewidth]{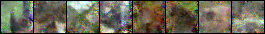}
    \end{subfigure}%
    \vspace{0.1\baselineskip}
    \begin{subfigure}{0.5\textwidth}
      \centering
      \includegraphics[width=\linewidth]{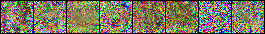}
    \end{subfigure}%
    \vspace{0.5\baselineskip}
    \begin{subfigure}{0.5\textwidth}
      \centering
      \includegraphics[width=\linewidth]{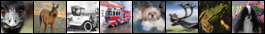}
    \end{subfigure}%
    \vspace{0.1\baselineskip}
    \begin{subfigure}{0.5\textwidth}
      \centering
      \includegraphics[width=\linewidth]{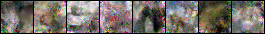}
    \end{subfigure}%
    \vspace{0.1\baselineskip}
    \begin{subfigure}{0.5\textwidth}
      \centering
      \includegraphics[width=\linewidth]{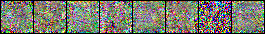}
    \end{subfigure}%
    \caption{The three upper rows show the reconstruction by the server of a batch of 8 input images based on the gradients of a participant. The first row shows the input private images, the second row shows the reconstructions from the FL original gradients of the participant, and the third row the reconstructions from the FFL mixed gradients of the participant. The three lower rows report another analogous example with a different batch of input images.}
    \label{fig:batch_construction_peerkj}
\end{figure}

\subsection{Runtime of FFL}
\label{sec:runtime}

\begin{table*}[t]
\scriptsize
\caption{CPU runtime in seconds of FFL in comparison with standard FL for one training round. Columns Dec. and Aggr. contain the decryption runtime of encrypted mixed updates, and the global reputation calculation and aggregation runtime on the server side (S). 
Column Mix. reports the overhead (extra runtime) on the participant side (P)
with respect to standard FL; this overhead results from local reputation calculation and the EXCHANGE\_FRAGMENTS protocol of FFL.}
\label{tab:runtime}
\resizebox{\textwidth}{!}{%
\begin{tabular}{|cl|ccc|ccc|ccc|ccc|}
\hline
\multicolumn{2}{|c|}{Benchmark}                                                                     & \multicolumn{3}{c|}{Adult-MLP}                                                            & \multicolumn{3}{c|}{MNIST-CNN}                                                            & \multicolumn{3}{c|}{CIFAR10-VGG16}                                                         & \multicolumn{3}{c|}{IMDB-BiLSTM}                                                           \\ \hline
\multicolumn{1}{|l|}{\multirow{2}{*}{\# of updates}} & \multicolumn{1}{c|}{\multirow{2}{*}{Method}} & \multicolumn{2}{c|}{S}                                               & P                  & \multicolumn{2}{c|}{S}                                               & P                  & \multicolumn{2}{c|}{S}                                                & P                  & \multicolumn{2}{c|}{S}                                                & P                  \\ \cline{3-14} 
\multicolumn{1}{|l|}{}                               & \multicolumn{1}{c|}{}                        & \multicolumn{1}{c|}{Dec.}               & \multicolumn{1}{c|}{Aggr.} & Mix.               & \multicolumn{1}{c|}{Dec.}               & \multicolumn{1}{c|}{Aggr.} & Mix.               & \multicolumn{1}{c|}{Dec.}               & \multicolumn{1}{c|}{Aggr.}  & Mix.               & \multicolumn{1}{c|}{Dec.}               & \multicolumn{1}{c|}{Aggr.}  & Mix.               \\ \hline
\multicolumn{1}{|c|}{\multirow{6}{*}{10}}            & FedAvg (FL)                                  & \multicolumn{1}{c|}{\multirow{5}{*}{0}} & \multicolumn{1}{c|}{0.001} & \multirow{5}{*}{0} & \multicolumn{1}{c|}{\multirow{5}{*}{0}} & \multicolumn{1}{c|}{0.001} & \multirow{5}{*}{0} & \multicolumn{1}{c|}{\multirow{5}{*}{0}} & \multicolumn{1}{c|}{0.064}  & \multirow{5}{*}{0} & \multicolumn{1}{c|}{\multirow{5}{*}{0}} & \multicolumn{1}{c|}{0.071}  & \multirow{5}{*}{0} \\ \cline{2-2} \cline{4-4} \cline{7-7} \cline{10-10} \cline{13-13}
\multicolumn{1}{|c|}{}                               & Median (FL)                                  & \multicolumn{1}{c|}{}                   & \multicolumn{1}{c|}{0.002} &                    & \multicolumn{1}{c|}{}                   & \multicolumn{1}{c|}{0.002} &                    & \multicolumn{1}{c|}{}                   & \multicolumn{1}{c|}{0.858}  &                    & \multicolumn{1}{c|}{}                   & \multicolumn{1}{c|}{0.816}  &                    \\ \cline{2-2} \cline{4-4} \cline{7-7} \cline{10-10} \cline{13-13}
\multicolumn{1}{|c|}{}                               & TMean (FL)                                   & \multicolumn{1}{c|}{}                   & \multicolumn{1}{c|}{0.003} &                    & \multicolumn{1}{c|}{}                   & \multicolumn{1}{c|}{0.007} &                    & \multicolumn{1}{c|}{}                   & \multicolumn{1}{c|}{1.201}  &                    & \multicolumn{1}{c|}{}                   & \multicolumn{1}{c|}{0.843}  &                    \\ \cline{2-2} \cline{4-4} \cline{7-7} \cline{10-10} \cline{13-13}
\multicolumn{1}{|c|}{}                               & MKrum (FL)                                   & \multicolumn{1}{c|}{}                   & \multicolumn{1}{c|}{0.004} &                    & \multicolumn{1}{c|}{}                   & \multicolumn{1}{c|}{0.013} &                    & \multicolumn{1}{c|}{}                   & \multicolumn{1}{c|}{0.822}  &                    & \multicolumn{1}{c|}{}                   & \multicolumn{1}{c|}{0.690}  &                    \\ \cline{2-2} \cline{4-4} \cline{7-7} \cline{10-10} \cline{13-13}
\multicolumn{1}{|c|}{}                               & FGold (FL)                                   & \multicolumn{1}{c|}{}                   & \multicolumn{1}{c|}{0.01}  &                    & \multicolumn{1}{c|}{}                   & \multicolumn{1}{c|}{0.017} &                    & \multicolumn{1}{c|}{}                   & \multicolumn{1}{c|}{0.497}  &                    & \multicolumn{1}{c|}{}                   & \multicolumn{1}{c|}{0.256}  &                    \\ \cline{2-14} 
\multicolumn{1}{|c|}{}                               & FFL                                          & \multicolumn{1}{c|}{0.064}              & \multicolumn{1}{c|}{0.021} & 0.231              & \multicolumn{1}{c|}{0.113}              & \multicolumn{1}{c|}{0.016} & 0.578              & \multicolumn{1}{c|}{0.372}              & \multicolumn{1}{c|}{2.308}  & 1.106              & \multicolumn{1}{c|}{0.321}              & \multicolumn{1}{c|}{1.773}  & 1.081              \\ \hline
\multicolumn{1}{|c|}{\multirow{6}{*}{50}}            & FedAvg (FL)                                  & \multicolumn{1}{c|}{\multirow{5}{*}{0}} & \multicolumn{1}{c|}{0.002} & \multirow{5}{*}{0} & \multicolumn{1}{c|}{\multirow{5}{*}{0}} & \multicolumn{1}{c|}{0.002} & \multirow{5}{*}{0} & \multicolumn{1}{c|}{\multirow{5}{*}{0}} & \multicolumn{1}{c|}{0.194}  & \multirow{5}{*}{0} & \multicolumn{1}{c|}{\multirow{5}{*}{0}} & \multicolumn{1}{c|}{0.275}  & \multirow{5}{*}{0} \\ \cline{2-2} \cline{4-4} \cline{7-7} \cline{10-10} \cline{13-13}
\multicolumn{1}{|c|}{}                               & Median (FL)                                  & \multicolumn{1}{c|}{}                   & \multicolumn{1}{c|}{0.007} &                    & \multicolumn{1}{c|}{}                   & \multicolumn{1}{c|}{0.008} &                    & \multicolumn{1}{c|}{}                   & \multicolumn{1}{c|}{6.843}  &                    & \multicolumn{1}{c|}{}                   & \multicolumn{1}{c|}{4.135}  &                    \\ \cline{2-2} \cline{4-4} \cline{7-7} \cline{10-10} \cline{13-13}
\multicolumn{1}{|c|}{}                               & TMean (FL)                                   & \multicolumn{1}{c|}{}                   & \multicolumn{1}{c|}{0.015} &                    & \multicolumn{1}{c|}{}                   & \multicolumn{1}{c|}{0.045} &                    & \multicolumn{1}{c|}{}                   & \multicolumn{1}{c|}{7.097}  &                    & \multicolumn{1}{c|}{}                   & \multicolumn{1}{c|}{4.267}  &                    \\ \cline{2-2} \cline{4-4} \cline{7-7} \cline{10-10} \cline{13-13}
\multicolumn{1}{|c|}{}                               & MKrum (FL)                                   & \multicolumn{1}{c|}{}                   & \multicolumn{1}{c|}{0.014} &                    & \multicolumn{1}{c|}{}                   & \multicolumn{1}{c|}{0.015} &                    & \multicolumn{1}{c|}{}                   & \multicolumn{1}{c|}{7.044}  &                    & \multicolumn{1}{c|}{}                   & \multicolumn{1}{c|}{5.657}  &                    \\ \cline{2-2} \cline{4-4} \cline{7-7} \cline{10-10} \cline{13-13}
\multicolumn{1}{|c|}{}                               & FGold (FL)                                   & \multicolumn{1}{c|}{}                   & \multicolumn{1}{c|}{0.042} &                    & \multicolumn{1}{c|}{}                   & \multicolumn{1}{c|}{0.035} &                    & \multicolumn{1}{c|}{}                   & \multicolumn{1}{c|}{1.609}  &                    & \multicolumn{1}{c|}{}                   & \multicolumn{1}{c|}{1.129}  &                    \\ \cline{2-14} 
\multicolumn{1}{|c|}{}                               & FFL                                          & \multicolumn{1}{c|}{0.263}              & \multicolumn{1}{c|}{0.04}  & 0.239              & \multicolumn{1}{c|}{0.509}              & \multicolumn{1}{c|}{0.033} & 0.583              & \multicolumn{1}{c|}{1.875}              & \multicolumn{1}{c|}{4.726}  & 1.188              & \multicolumn{1}{c|}{1.612}              & \multicolumn{1}{c|}{3.536}  & 1.174              \\ \hline
\multicolumn{1}{|c|}{\multirow{6}{*}{100}}           & FedAvg (FL)                                  & \multicolumn{1}{c|}{\multirow{5}{*}{0}} & \multicolumn{1}{c|}{0.004} & \multirow{5}{*}{0} & \multicolumn{1}{c|}{\multirow{5}{*}{0}} & \multicolumn{1}{c|}{0.004} & \multirow{5}{*}{0} & \multicolumn{1}{c|}{\multirow{5}{*}{0}} & \multicolumn{1}{c|}{0.341}  & \multirow{5}{*}{0} & \multicolumn{1}{c|}{\multirow{5}{*}{0}} & \multicolumn{1}{c|}{0.517}  & \multirow{5}{*}{0} \\ \cline{2-2} \cline{4-4} \cline{7-7} \cline{10-10} \cline{13-13}
\multicolumn{1}{|c|}{}                               & Median (FL)                                  & \multicolumn{1}{c|}{}                   & \multicolumn{1}{c|}{0.015} &                    & \multicolumn{1}{c|}{}                   & \multicolumn{1}{c|}{0.013} &                    & \multicolumn{1}{c|}{}                   & \multicolumn{1}{c|}{16.582} &                    & \multicolumn{1}{c|}{}                   & \multicolumn{1}{c|}{8.926}  &                    \\ \cline{2-2} \cline{4-4} \cline{7-7} \cline{10-10} \cline{13-13}
\multicolumn{1}{|c|}{}                               & TMean (FL)                                   & \multicolumn{1}{c|}{}                   & \multicolumn{1}{c|}{0.035} &                    & \multicolumn{1}{c|}{}                   & \multicolumn{1}{c|}{0.111} &                    & \multicolumn{1}{c|}{}                   & \multicolumn{1}{c|}{16.645} &                    & \multicolumn{1}{c|}{}                   & \multicolumn{1}{c|}{8.929}  &                    \\ \cline{2-2} \cline{4-4} \cline{7-7} \cline{10-10} \cline{13-13}
\multicolumn{1}{|c|}{}                               & MKrum (FL)                                   & \multicolumn{1}{c|}{}                   & \multicolumn{1}{c|}{0.028} &                    & \multicolumn{1}{c|}{}                   & \multicolumn{1}{c|}{0.031} &                    & \multicolumn{1}{c|}{}                   & \multicolumn{1}{c|}{27.875} &                    & \multicolumn{1}{c|}{}                   & \multicolumn{1}{c|}{16.893} &                    \\ \cline{2-2} \cline{4-4} \cline{7-7} \cline{10-10} \cline{13-13}
\multicolumn{1}{|c|}{}                               & FGold (FL)                                   & \multicolumn{1}{c|}{}                   & \multicolumn{1}{c|}{0.086} &                    & \multicolumn{1}{c|}{}                   & \multicolumn{1}{c|}{0.063} &                    & \multicolumn{1}{c|}{}                   & \multicolumn{1}{c|}{6.180}  &                    & \multicolumn{1}{c|}{}                   & \multicolumn{1}{c|}{2.273}  &                    \\ \cline{2-14} 
\multicolumn{1}{|c|}{}                               & FFL                                          & \multicolumn{1}{c|}{0.359}              & \multicolumn{1}{c|}{0.043} & 0.249              & \multicolumn{1}{c|}{1.01}               & \multicolumn{1}{c|}{0.035} & 0.594              & \multicolumn{1}{c|}{3.675}              & \multicolumn{1}{c|}{4.883}  & 1.319              & \multicolumn{1}{c|}{3.236}              & \multicolumn{1}{c|}{3.583}  & 1.282              \\ \hline
\end{tabular}%
}
\end{table*}

Table~\ref{tab:runtime} reports the CPU runtimes in seconds per training round for mixed update decryption, global reputation calculation and model aggregation on the server side, and local model computation, fragment exchanging and mixing on the participants' side. 
Note that the reported runtimes for exchanging and mixing fragments are the average for a single participant.
We computed the runtime for the three benchmarks with 10, 50 and 100 updates per training round to illustrate how the runtime scales with the number of updates.
On the server side,
we report the total runtime for all methods; in contrast,
on the participant side, we report 
the overhead (extra runtime)  with respect to standard FL. 

It can be seen that, as the number of updates and
the  model size increase, the FFL server's runtime 
grows less than for the other methods, which confirms 
our computation cost analysis
 of Appendix S.II-A in the supplementary materials. In particular, with CIFAR10-VGG16, 
FFL achieved the lowest runtime among all non-baseline methods. 
Furthermore, unlike FFL, the 
other methods are unable to thwart privacy attacks or provide adequate protection against security attacks. 

Regarding the runtime overhead incurred by each participant, the maximum runtime overhead resulting from exchanging and mixing fragments was 
about 1.319 seconds for the largest model we used, which was VGG16. Also, it is worth noting that the increase in the number of updates had little impact on the overhead of the participants because their computations 
essentially depend on the model dimensionality. 

Nevertheless, the FFL runtime is very small if we compare it with that of ~\cite{zhang2020batchcrypt}, which provides a level of privacy similar to ours but without being able to neutralize poisoned updates. Specifically, for an LSTM model containing only $4.02$ million parameters, ~\cite{zhang2020batchcrypt} report a runtime overhead for each participant of about $176$ seconds and a total runtime for the server of $174$ seconds to aggregate $50$ local updates. Note also that the hardware specifications employed by ~\cite{zhang2020batchcrypt} are superior to ours. 
Therefore, the participants in FFL can turn a blind eye to the computational overhead they incur in exchange for protecting their privacy, countering security attacks and learning a more accurate global model.

\section{Conclusions and Future Work}
\label{sec:conclusion}

In this paper, we have presented \emph{fragmented federated learning} (FFL), a novel approach based on cooperation between participants in FL systems to preserve their privacy without renouncing robust and accurate aggregation of their updates to the global model. FFL offers a practical solution where participants privately and efficiently exchange random fragments of their updates before sending them to the server. We have also proposed a novel reputation-based defense tailored for FFL that builds trust in the participants based on the quality of the mixed updates they send and the fragments they exchange.
We have demonstrated that the proposed framework can effectively counter privacy and security attacks.
All the above is achieved while obtaining a global model's accuracy similar to that of standard FL (when no attack is performed) and imposing affordable communication cost and computation overhead on the participating parties.  
The efficiency of FFL makes it applicable to large-scale FL systems.

As future work, we plan to evaluate the robustness of FFL against backdoor attacks. Also, we plan to test its performance when working with non-identical and independently distributed (non-iid) data.

\section*{Acknowledgments}

Thanks go to Ashneet Singh for help with some figures. This research was funded by the European Commission (projects H2020-871042 ``SoBigData++'' and H2020-101006879 ``MobiDataLab''), the Government of Catalonia (ICREA Acad\`emia Prizes to J.
Domingo-Ferrer, and D. S\'anchez, FI grant to N. Jebreel), and MCIN/AEI/ 10.13039/501100011033 and ``ERDF A way of making Europe'' under grant PID2021-123637NB-I00 ``CURLING''. The authors are with the UNESCO Chair in Data Privacy, but the views in this paper are their own and are not necessarily shared by UNESCO.

\ifCLASSOPTIONcaptionsoff
  \newpage
\fi

\bibliography{my_bib.bib}
\bibliographystyle{IEEEtran}

\begin{IEEEbiography}[{\includegraphics[width=1in,height=1.25in,clip,keepaspectratio]{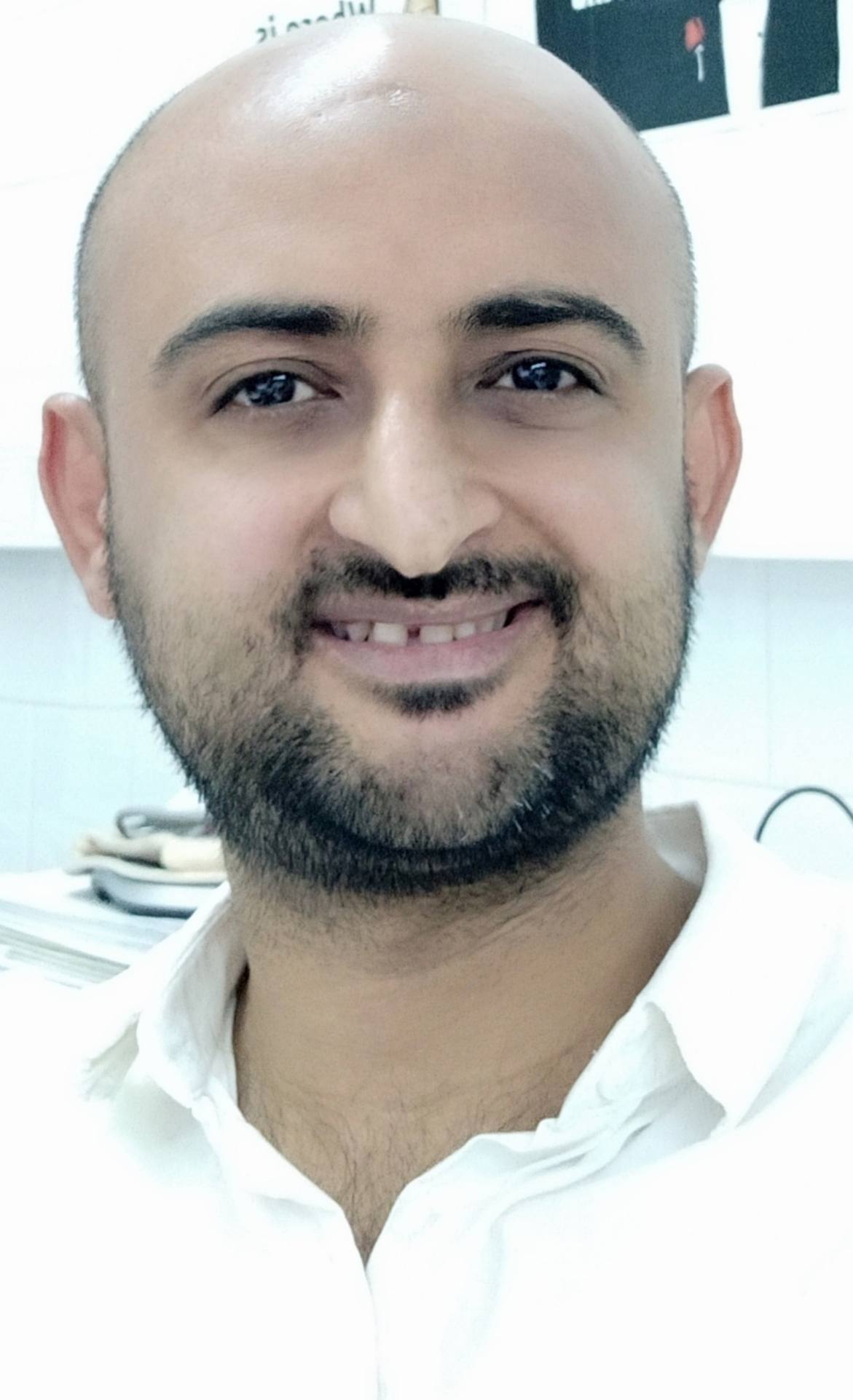}}]{Najeeb Moharram Jebreel} is a Ph.D. student in Computer Science at Universitat Rovira i Virgili, Tarragona, Catalonia. He obtained a B.Sc. in Computer Science from Hodeidah University and an M.Sc. in AI and Computer Security from Universitat Rovira i Virgili. His interests are in reconciling accuracy, security and privacy in distributed machine learning.
\end{IEEEbiography}

\begin{IEEEbiography}[{\includegraphics[width=1in,height=1.25in,clip,keepaspectratio]{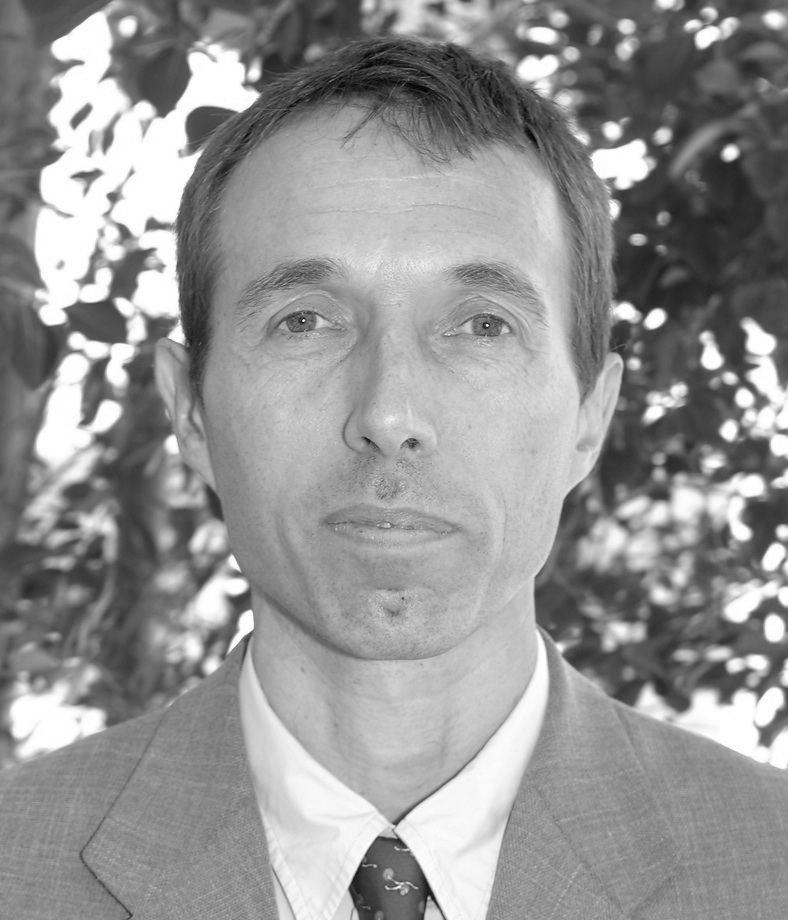}}]{Josep Domingo-Ferrer}
(Fellow, IEEE)
is a Distinguished Professor of Computer Science and
an ICREA-Acad\`emia Researcher at Universitat Rovira i Virgili,
Tarragona, Catalonia, where he holds the UNESCO Chair in Data Privacy
and leads CYBERCAT. He received B.Sc.-M.Sc. and Ph.D. degrees in Computer Science from the Autonomous University of Barcelona, and an B.Sc.-M.Sc. in Mathematics from U.N.E.D. His research interests are in security, privacy and ethics-by-design computing.
\end{IEEEbiography}

\begin{IEEEbiography}[{\includegraphics[width=1in,height=1.25in,clip,keepaspectratio]{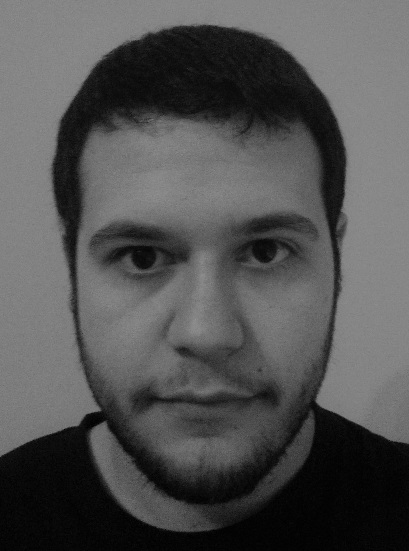}}]{Alberto Blanco-Justicia}
is a senior postdoc at Universitat Rovira i Virgili, Tarragona,
Catalonia. He obtained a M.Sc. in Computer Security and a Ph.D. in Computer Engineering and Mathematics, both from Universitat Rovira i Virgili. His interests are in data privacy, security, ethically-aligned design and machine learning explainability. He has been involved in several European R+D projects and industrial contracts.
\end{IEEEbiography}

\begin{IEEEbiography}[{\includegraphics[width=1in,height=1.25in,clip,keepaspectratio]{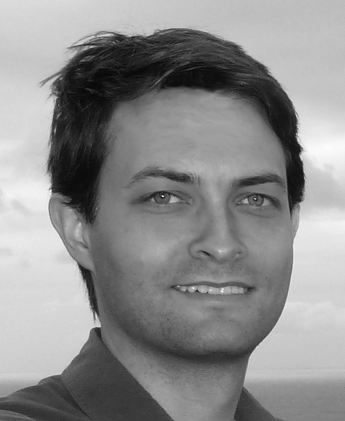}}]{David S{\'a}nchez}
is a Serra H\'unter Full Professor of Computer Science and an ICREA-Acad\`emia Researcher at Universitat Rovira i Virgili, Tarragona,
Catalonia. He received his Ph.D. in Computer Science from the Technical University of Catalonia. His research interests include data semantics, ontologies, machine learning and data privacy.
\end{IEEEbiography}

\appendices
\section{Convergence Analysis}
\label{sec:convergence}
To prove the convergence of FFL, first we need to prove that applying the adaptive aggregation in Protocol 1 on the mixed updates gives the same result as applying FedAvg on the original updates when all participants are honest.
\begin{proposition}
\label{props4}
Given that parameters move among participants but remain in their original coordinates when mixing updates, applying the adaptive aggregation in Protocol 1 on the mixed updates when all participants are honest produces the same result as applying FedAvg on the original updates.
\end{proposition}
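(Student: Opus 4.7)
The plan is to split the argument into two clean pieces: (i) an algebraic identity showing that, after server decryption, the mixed updates produced by EXCHANGE\_FRAGMENTS sum to the same value as the corresponding original updates, and (ii) a structural argument showing that, when every participant is honest, the trust weights $\nu_k^t$ used in line~\ref{lin18} of Protocol~\ref{protocol1} reduce to the uniform weight $1$. Composing the two yields the FedAvg formula~(\ref{eq:fedavg}) applied to the original updates, restricted to the selected set $S$.

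First I would establish the coordinate-wise invariant $(W_k)_{mix}+(W_j)_{mix}=W_k+W_j$ for any pair $(k,j)$ that runs EXCHANGE\_FRAGMENTS with mask $m$. Starting from Expression~(\ref{exp1}) and using that the two copies of $\rho_j$ cancel under $\oplus$ and that the server strips $r_j$ upon decryption, the subexpression $W_k\oplus(W_k\odot m)$ evaluates to $W_k\odot\lnot m$ (it zeros the coordinates of $W_k$ where $m_i=1$), and $W_j\oplus(W_j\odot\lnot m)$ evaluates to $W_j\odot m$. Hence the server obtains $(W_k)_{mix}=(W_k\odot\lnot m)\oplus(W_j\odot m)$, which, coordinate by coordinate, equals $W_{k,i}$ when $m_i=0$ and $W_{j,i}$ when $m_i=1$ (the XOR is a disjoint union since the two operands have disjoint supports). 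The symmetric computation gives $(W_j)_{mix}$ with the complementary split. Adding the two (now as real-valued vectors) restores $W_k+W_j$ on every coordinate, because each index $i$ contributes $W_{k,i}+W_{j,i}$ regardless of $m_i$. Summing over the matching formed by the exchanges in $S$ yields $\sum_{k\in S}(W_k)_{mix}=\sum_{k\in S}W_k$, and recalling the pre-scaling $W_k\leftarrow d_k W_k$ performed in PARTICIPANT\_UPDATE, this equals $\sum_{k\in S}d_k W_k^{\mathrm{orig}}$.

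Second, I would argue that, in the all-honest regime, the vector $\nu^t$ from Expression~(\ref{normalize_gr}) is identically $1$ on $S$. Because all participants pursue the same objective, their mixed gradients share a common centroid up to benign variance, so every $sim_k^t$ is close to the median and every reputation increment $sim_k^t-Q1_{sim^t}$ is on the same order and non-negative in expectation. Consequently $\gamma_k^t$ grows at a uniform rate across honest $k$, while $Q1_{\gamma^t}$ grows at the same rate, and the positive gaps $\gamma_k^t-Q1_{\gamma^t}$ diverge in $t$ for all but at most a measure-zero tie at the quartile boundary. The $\tanh$ function therefore saturates and $\nu_k^t\to 1$ for every $k\in S$, which, substituted into line~\ref{lin18} together with the identity from the previous paragraph, gives exactly $\frac{1}{\sum_{k\in S}d_k}\sum_{k\in S}d_k W_k^{\mathrm{orig}}$, that is, FedAvg.

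The main obstacle I foresee lies in the second step. The $\max(\cdot,0)$ clipping and the first-quartile centering in Expression~(\ref{normalize_gr}) mean that, at any finite round, the bottom quarter of the honest participants could formally receive a trust value below $1$ or even $0$, so the equality with FedAvg is not a bit-exact identity at every round but an asymptotic (or idealized) statement. I would therefore either phrase the conclusion as a limiting equality as $t\to\infty$, or state a mild precondition (all honest reputations have grown enough past $Q1_{\gamma^t}$ for $\tanh$ to be numerically indistinguishable from $1$) under which the equality holds exactly. With that caveat made explicit, the two-paragraph argument above closes the proof.
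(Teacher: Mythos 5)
Your proposal is correct and follows essentially the same two-step structure as the paper's proof: (i) mixing merely redistributes coordinate values among participants without changing their positions, so coordinate-wise sums are preserved, and (ii) with all trust weights equal to $1$, line~\ref{lin18} of Protocol~\ref{protocol1} collapses to FedAvg on the pre-scaled updates. The difference is one of rigor rather than route. For step (i) the paper invokes the permutation-insensitivity of the coordinate-wise average abstractly, whereas you derive the concrete pairwise invariant $(W_k)_{mix}+(W_j)_{mix}=W_k+W_j$ directly from the XOR algebra of Expression~(\ref{exp1}); your version is more explicit and actually verifies that the protocol implements the claimed permutation. For step (ii) the paper simply writes ``let us assume the server gives a trust score of $1$ to every sender as they are honest,'' i.e., it assumes away exactly the point you flag: the first-quartile centering and the $\max(\cdot,0)$ clipping in Expression~(\ref{normalize_gr}) mean that, at any finite round, roughly the bottom quarter of honest participants receive trust below $1$ or equal to $0$, so the identity with FedAvg is an idealized or asymptotic statement rather than a round-by-round equality. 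Your proposed remedy (state the result in the limit $t\to\infty$, or under the precondition that all honest reputations have saturated $\tanh$) is consistent with how the paper itself later argues in Proposition~\ref{props5}, where only the roughly $75\%$ of participants above $Q1_{\gamma}$ have trust converging to $1$ and the conclusion is correspondingly weakened to an approximate equality. In short, your argument is sound and, on the second step, more honest about the hypothesis under which the proposition holds than the paper's own proof.
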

\begin{proof}
\label{proof4}
 Let $g_c:\mathbb{R}^n \mapsto \mathbb R$ be the coordinate-wise average operator which aggregates each coordinate independently from the others, 
 and $(W_{1, i}, \ldots, W_{n,i})$ be the $i$-th coordinate parameters for $n$ local updates. Let $\pi:\mathbb{R}^n \mapsto \mathbb{R}^n$ be a random  permutation function that is applied to the $i$-th coordinate parameters
 to yield one of $n!$ possible permutations.
 Given the permutation-insensitivity property of $g_c$, it
 holds that 
 \begin{equation}
 \label{eq_avg_insensitivity}
     g_c(W_{1, i}, \ldots, W_{n,i}) =  g_c(\pi(W_{1, i}, \ldots, W_{n,i}))
 \end{equation}
 for any permutation function $\pi$. Hence, applying $g_c$ on mixed updates will produce the same result as applying it on the original updates as long as the exchanged parameters remain in their original coordinates.
 
 FedAvg is a weighted average operator that aggregates the parameters coordinate-wise proportionally to the number of data points of their senders. 
 Let us assume the server gives a trust score of 1 to every sender as they are honest. 
 Given that each participant $k$ weights her original local update by her number of data points $d_k$, the adaptive aggregation process in Protocol 1 will produce the same global model as applying FedAvg on the original local updates.
\end{proof}

\begin{proposition}
\label{props5}
Given the process in Protocol 1, the convergence of FFL is guaranteed as long as FedAvg converges.
\end{proposition}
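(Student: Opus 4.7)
The plan is to reduce the convergence of FFL to that of FedAvg by arguing that, in the adaptive aggregation step of Protocol 1, the effective contribution from potential attackers asymptotically vanishes, so that the iterates produced by FFL eventually coincide with those produced by FedAvg on the honest participants' original updates.

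First, I would dispatch the easy sub-case in which all selected participants are honest. By Proposition~\ref{props4}, the adaptive aggregation applied to the mixed updates equals FedAvg applied to the original updates, so any per-round guarantee FedAvg enjoys transfers verbatim to FFL, and the limit of the FFL iterates is exactly the limit of the FedAvg iterates.

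Second, I would handle the general case by invoking the security analysis of Section~\ref{sec:security}. Under any of the three attacker strategies, the similarity statistic $sim_k^t$ of a persistently poisoning participant lies below the quartile $Q1_{sim^t}$, so the update rule in Expression~(\ref{update_gr}) drives $\gamma_k^t$ below $Q1_{\gamma^t}$; the candidate-selection rule in Procedure~\ref{proc1} then excludes such a participant from training, and the squashing by $\tanh$ together with the $\max(\cdot,0)$ clamp in Expression~(\ref{normalize_gr}) forces $\nu_k^t=0$ even in the residual rounds where the exclusion has not yet kicked in. Dually, the honest participants' reputations drift upward, their trust values converge to $1$, and Proposition~\ref{props4} applies to the restricted sub-population of honest participants. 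From some round $t^\star$ onward, the FFL aggregation is therefore indistinguishable from FedAvg run on the honest participants' updates alone, and the assumed convergence of FedAvg yields convergence of FFL.

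The main obstacle will be making the phrase \emph{from some round $t^\star$ onward} rigorous, since the reputation trajectories depend on the random participant selection, on the realized similarities, and on whether an honest participant happens to exchange a fragment with a still-reputable attacker. A fully probabilistic bound on the hitting time $t^\star$ seems delicate, but for the statement of Proposition~\ref{props5} it should suffice to bound the cumulative perturbation contributed before $t^\star$ by a finite quantity and absorb it into the standard FedAvg error analysis; the proposition only asserts a conditional implication (\emph{as long as FedAvg converges}), so a coarse almost-sure argument that $\nu_{k'}^t\to 0$ for every attacker $k'$, combined with Proposition~\ref{props4} in the tail, is the natural route.
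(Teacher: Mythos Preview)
Your proposal is correct and follows essentially the same route as the paper: reduce to FedAvg via Proposition~\ref{props4}, argue that attackers' trust values are driven to zero by the reputation mechanism while honest participants' trust values converge to one via $\tanh$, and conclude that the FFL aggregation asymptotically coincides with FedAvg on the honest updates. The paper organizes this as two conditions (poisoned updates are removed; good updates are left unmodified) rather than your two cases, and it is even less explicit than you are about the rigor of the hitting-time argument, relying throughout on ``expected'' behavior; your acknowledgment of the $t^\star$ difficulty is a fair assessment of the level of rigor the paper itself achieves.
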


\begin{proof}
\label{proof5}
Given Proposition~S.\ref{props4}, we just need to prove that FFL fulfills two conditions: 1) it removes poisoned mixed updates from model aggregation and 2) it does not modify the parameters of good mixed updates in the adaptive aggregation phase.

\textbf{Condition 1:} As the training evolves, the attackers who exchange poisoned fragments or send poisoned updates are expected to obtain low global and local reputations and thus not be selected by the server or honest participants for future training or fragments exchange. Therefore, poisoned mixed updates are expected to be removed from the global model aggregation.

\textbf{Condition 2:} As the training evolves, participants with global reputations greater than $Q1_{\gamma}$ are expected to be honest and thus be 
repeatedly selected by the server for future training rounds.
This will make their global reputations greater and greater, and thus their trust values will converge to 1 due to the use of the $\tanh$ functions. 
That is,
\begin{equation}
\label{lim_honest}
   \Lim{t \to \infty} \tanh{((\gamma^t_k|\gamma^t_k > Q1_{\gamma^t}) - Q1_{\gamma^t})} = 1.
\end{equation}
As a result, the parameters of the mixed updates 
of those participants will stay unaltered.
Since honest participants above $Q1_\gamma$ are about $75\%$ of 
the $K$ participants in the system, the average of their mixed updates 
is expected to be an unbiased estimator of the average of the $K$ participants' original updates when all participants are honest. 
Let $W^\infty_*$ be the global model obtained 
when all participants are honest, 
and let $(W^\infty_k)_{mix}$ and $W^\infty_k$ be, respectively,
participant $k$'s mixed and original updates, after $\infty$
iterations. 
Let $H$ be the set of expected honest participants with global reputations greater than $Q1_{\gamma^t}$.
Then we have
\begin{equation}
\begin{aligned}
\label{lim_avg}
\Lim{t \to \infty} \frac{1}{\sum_{k \in H} d_k} \sum_{k \in H}  \nu_k^t (W_{k}^{t})_{mix} 
\simeq \frac{1}{\sum_{k \in H} d_k} \sum_{k \in H}  (W_{k}^\infty)_{mix} \\
\simeq \frac{1}{\sum_{k \in H} d_k} \sum_{k \in H}  d_k W_{k}^\infty 
\simeq  W^\infty_*.
\end{aligned}
\end{equation}
This shows that Condition 2 is satisfied.
\end{proof}

\section{Complexity analysis}
\label{sec:complexity}
In this appendix, we analyze the computation and communication cost overheads of FFL. 
\subsection{Computation overhead of FFL}
\label{sec:computation}
We examine the computation overhead added by FFL to standard FL for the server and participants, and we compare it with related works in the state of 
the art.
We also consider the last-layer dimensionality $D_L$ when looking 
at FFL and FoolsGold~\cite{fung2020limitations}. 
Note that the last layer usually contains a very small number of parameters compared to the whole model parameters and thus has a much lower dimensionality. 
For example, in the VGG16 model we use in our experiments, the model contains about $15$ million parameters, whereas its last layer contains about $250K$ parameters.

\textbf{Server side.} The computation overhead on the server side 
can be broken down into the following parts:
(1) selecting participants for training based on their global reputations, which costs $\mathcal{O}(n)$;
(2) decrypting $n$ encrypted seeds and their corresponding $n$ OTP-encrypted mixed updates, which costs $\mathcal{O}(nc + nD)$, with $c$ being a constant; 
(3) extracting the gradients of the mixed updates, computing their magnitudes, computing the median of the computed magnitudes and the distances between the computed magnitudes and their median, which costs $\mathcal{O}(nD + nD + n\log{n} + n)$,
(4) computing the median of the last-layer gradients of $n$ mixed updates 
and the cosine similarity between the last-layer gradients to their median, which costs $\mathcal{O}(n\log{n}D_L + nD_L)$; and
(5) computing the similarity vector $sim^t$, updating the global reputation vector and finally computing the trust vector $\nu^t$, which 
costs $\mathcal{O}(n)$.
Therefore, the overall computation overhead of the server is $\mathcal{O}(n(D+\log{n} +D_L\log{n} + D_L + 1))$.

\textbf{Participant side.} The participant's computation overhead can be broken down into the following parts:
(1) computing the first quartile of her local reputation vector using the Quickselect algorithm, which costs $\mathcal{O}(n)$;
(2) computing the $2048$-bit Diffie-Hellman power, which costs $\mathcal{O}(1)$;
(3) fragmenting the local update, encrypting two updates using the two generated OTPs, and subtracting one OTP-encrypted update from the other, which costs $\mathcal{O}(D + 2D + D)$; and
(4) encrypting a $3072$-bit seed and finally updating the local reputation of a single participant in a single index, which costs $\mathcal{O}(1 + 1)$.
Therefore, the overall computation overhead of a participant is $\mathcal{O}(n + D)$.

Now, let us compare the computation overhead of FFL with that of 
the following methods: BREA~\cite{so2020byzantine}, the median~\cite{yin2018byzantine}, the trimmed mean~\cite{yin2018byzantine}, multi-Krum~\cite{blanchard2017machine}, and FoolsGold~\cite{fung2020limitations}.
Note that, while BREA~\cite{so2020byzantine} tries to counter both privacy and security attacks, the other methods are only meant to counter security attacks.

\begin{table*}[!ht]
\centering
\caption{Comparison of computation overhead}
\label{tab:computation_overhead}
\begin{tabular}{|l|c|c|}
\hline
Framework & Server(s)                     & Participant    \\ \hline
FFL        & $\mathcal{O}(n(D+\log{n}+D_L\log{n}+D_L+1))$          & $\mathcal{O}(n + D)$          \\ \hline
BREA       & $\mathcal{O}((n^2 + nD)\log^2{n\log{\log{n}}})$             & $\mathcal{O}(nD + n^2)$ \\ \hline
Median      & $\mathcal{O}(n\log{n}D)$ &  0 \\ \hline
Trimmed mean     & $\mathcal{O}(n\log{n}D)$                  &        0  \\ \hline
Multi-Krum    &    $\mathcal{O}(n^2D)$    &   0 \\ \hline
FoolsGold    &    $\mathcal{O}(n^2 D_L)$    &  0  \\ \hline
\end{tabular}
\end{table*}

Table~S.\ref{tab:computation_overhead} shows that FFL imposes less computation overhead on the server than the other methods.
If we compare the computation overhead of FFL with that of BREA on both the server and the participant, FFL is much more efficient and thus more suitable for large-scale FL applications.

\subsection{Communication cost of FFL}
\label{sec:communication}

Here we turn to the total communication cost of FFL with respect to the model size $D$ and the number of participants $n$. These two parameters are 
relevant because $D$ has a high impact on the size 
of the transmitted messages whereas $n$ has a high impact 
on their number. 
We also use $c$ to denote a constant communication cost
for any sent or received parameter with a constant size,
such as the encrypted seeds.

\textbf{Server side.} The communication cost at the server side 
can be broken down into the following parts:
(1) sending one global model to each participant and receiving one OTP-encrypted mixed update from each participant, which costs $\mathcal{O}(2nD)$; 
and (2) sending a $32$-bit global reputation $\gamma_k$ to each participant and receiving a $3072$-bit encrypted seed with each OTP-encrypted mixed update, which costs $\mathcal{O}(\sim 3n Kbit) = \mathcal{O}(nc)$.
Therefore, the overall communication cost of the server is $\mathcal{O}(2nD + nc)$.

\textbf{Participant side.} The participant's communication cost can be broken 
down into the following parts:
(1) receiving one global model from the server and sending one OTP-encrypted mixed update to the server, which costs $\mathcal{O}(2D)$; 
(2) sending two OTP-encrypted updates and receiving two OTP-encrypted updates, which costs $\mathcal{O}(4D)$; and
(3) exchanging two $2048$-bit Diffie-Hellman powers (see~\cite{rfc5114} for justification of this 2048 length) and two $3072$-bit encrypted seeds (see~\cite{barker2006recommendation} for justification of this $3072$ length) with another participant, plus sending $3072$-bit encrypted seed to the server and receiving a $32$-bit global reputation $\gamma_k$ from it, which costs $\mathcal{O}(6c)$.
Therefore, the overall communication cost for each participant is $\mathcal{O}(6D + 6c)$.

Now, let us compare the communication cost of FFL with that of each of the standard FL, BREA~\cite{so2020byzantine}, PEFL~\cite{privacy_enhanced9524709} and ShieldFL~\cite{ma2022shieldfl}.
Table~S.\ref{tab:communication_overhead} shows the communication complexity of each framework on the server(s) and the participant.

\begin{table}[!ht]
\centering
\caption{Comparison of communication cost, where $\delta$ is the message expansion factor resulting from HE, and $c$ is a constant communication cost
for any sent or received parameter with a constant size, such as the encrypted seeds.}
\label{tab:communication_overhead}
\begin{tabular}{|l|c|c|}
\hline
Framework & Server(s)                     & Participant    \\ \hline
FL        & $\mathcal{O}(2nD)$                        & $\mathcal{O}(2D)$          \\ \hline
FFL       & $\mathcal{O}(2nD + nc)$             & $\mathcal{O}(6D + 6c)$ \\ \hline
BREA      & $\mathcal{O}(2nD + n^2c)$ & $\mathcal{O}(2nD + D + n)$ \\ \hline
PEFL     & $\mathcal{O}(4 \delta nD + \delta D)$                  & $\mathcal{O}(\delta D + D)$          \\ \hline
ShieldFL    &    $\mathcal{O}(13 \delta nD)$    & $\mathcal{O}(2 \delta D)$    \\ \hline
\end{tabular}
\end{table}

In standard FL, the server sends $n$ copies of the global model to $n$ participants and receives $n$ local updates back, which costs $\mathcal{O}(2nD)$.
FFL imposes the same cost on the server plus $\mathcal{O}(nc)$, which corresponds to the received $n$ encrypted seeds, and the sent $n$ global reputations.
Since the server receives the encrypted seeds along with her encrypted mixed updates in the same messages and sends the $32$-bit global reputations along with the updated global models in the same messages, FFL sends and receives the same number of messages as in standard FL.
Moreover, the server only incurs 3072 bits $\approx 0.0004$ MB communication overhead for every received mixed update and $32$ bits for every sent updated global model, which is negligible compared to the sizes of the current state-of-art deep learning models. 
For example, the sizes of the three deep learning models we use in our experimental analysis below are $0.0874$ MB (for a CNN model), $58.9131$ MB (for a VGG model), and $50.6453$ MB (for a BiLSTM model).
Therefore, FFL imposes almost the same communication cost on the server as standard FL.

On the participant's side, in FFL each participant sends and receives 6 updates in total and also incurs $6c$ additional cost, as note above.
Therefore, the number of messages a participant needs to send or receive is 6 in total, which is three times as many as in standard FL.
Nevertheless, the communication cost each participant incurs is reasonable in exchange for the benefit she receives in protecting her privacy and learning a more accurate global model, and also in comparison with the other works.
 
To put these values in context, let us compare them with the communication cost of BREA~\cite{so2020byzantine}, PEFL~\cite{privacy_enhanced9524709} and ShieldFL~\cite{ma2022shieldfl}, three state-of-the-art frameworks that provide a similar level of privacy and accuracy as FFL. 

In BREA~\cite{so2020byzantine}, the server sends and receives the same number of updates. 
However, it incurs an additional quadratic communication cost from 
receiving the locally computed Euclidean distances from the $n$ participants.
This quadratic term can make the server a bottleneck of communication as the number of participants in the system increases. 
FFL, however, imposes a much lower communication cost on the server, as we have shown.
On the participant side, the communication cost of BREA depends on both the number of participants $n$ and the model dimensionality $D$. 
In real-world FL scenarios, we expect both $n$ and $D$ to be large, and hence a significant communication cost will be imposed on the participant.
In contrast, the communication cost on the participant side in FFL is independent of the number of participants in the system.

Both PEFL~\cite{privacy_enhanced9524709} and ShieldFL~\cite{ma2022shieldfl} are HE-based frameworks that expand updates by some factor $\delta$ (which is usually large) and thus add a necessarily significant communication and computation overhead on the server and participants.
PEFL~\cite{privacy_enhanced9524709} tries to reduce the size of an encrypted update by using a packing technique, which results in roughly a threefold 
message expansion, as the authors claim.
This makes the server's communication cost about $\mathcal{O}(12nD + 3D)$ and the participant's cost about $\mathcal{O}(4D)$. Note that the packing and HE cause a high computation overhead on the participants' devices.
Moreover, the two non-colluding servers in these two frameworks exchange encrypted updates and some other parameters between them several times so that they can filter the poisoned updates while protecting privacy, which involves exchanging several large size messages.

To summarize, FFL imposes almost the same communication cost on the server as standard FL (without privacy preservation), and only a marginally higher 
communication cost on participants. This makes it 
more suitable for real-world privacy-preserving FL than
the state-of-the-art methods.

To confirm the theoretical communication analysis above, we next provide some actual communication times that may illustrate the feasibility of our approach's (and its practical benefits w.r.t. related works).
To this end, we have considered the average speeds of mobile internet communication among countries
worldwide~\footnote{\url{https://www.speedtest.net/global-index}. Checked on Aug. 1, 2022.}.
Specifically, the current average download speed, upload speed and latency are, respectively, $31.01$ megabits per second ($Mbps$), $8.66$ $Mbps$ and $29$ milliseconds ($ms$).

According to these figures, we next report the communication and latency times of FFL for the MNIST-CNN and the CIFAR1010-VGG16 benchmarks, and we compare them with those of standard FL and BREA~\cite{so2020byzantine}.
Comparing with BREA makes sense because it is, like our framework, a single-server framework and it also involves exchanging messages between participants.
We were not able to compare with PEFL~\cite{privacy_enhanced9524709} or ShieldFL~\cite{ma2022shieldfl} because they do not give the exact value of the used expansion factor $\delta$ and also because they are two-server frameworks.
In this setting, the sizes of the CNN and VGG16 models are $0.0874MB$ ($0.6992Mbit$) and $58.9131MB$ ($471.3048Mbit$), which represent small and large DL models.
Also, we consider $100K$ participants in the system, which would correspond to a large-scale FL system.

Communication times are based on the following constraints:
\begin{itemize}
    \item When a participant receives (downloads) a message from the server, we consider her/his download speed.
    \item When a participant sends a message to the server, we consider her/his upload speed.
    \item When two participants exchange messages, we consider their upload speed. 
\end{itemize}

We next report communication times for each framework's server-participant and participant-participant interaction for one training round.

\textbf{FL communication time}. In standard FL, each participant downloads a copy of the global model from the server and uploads her/his local update to the server, so there is no participant-participant communication.
Thus, the server-participant communication time for the MNIST-CNN is $0.6992Mbit/31.01Mbps + 0.6992Mbit/8.66Mbps  \approx 0.1033$ seconds. 
By adding one latency for downloading and one for uploading, 
we obtain that the total
communication time is
$0.1033 + 2 \times 29/1000 \approx 0.1613$ seconds.
The communication time of CIFAR10-VGG16 can be computed following the same steps, which result in a total communication time of about $69.6797$ seconds.

\textbf{FFL communication time}. 
In FFL, there are server-participant and participant-participant communications.
In the server-participant communication, each participant downloads from the server a copy of the global model along with a $32$-bit global reputation value in the same message.
Also, each participant uploads one OTP-encrypted mixed update to the server along with a $3,072$ bit encrypted seed in the same message.
Thus, the server-participant communication time for MNIST-CNN is 
$(0.6992Mbit + 0.000032Mbit)/31.01Mbps + (0.6992Mbit + 0.003072Mbit)/8.66Mbps \approx 0.1036$ seconds.
In the participant-participant communication, each participant sends one $2,048$-bit Diffie-Hellman power and receives one $2,048$-bit Diffie-Hellman power.
Also, s/he sends two OTP-encrypted updates plus a $3,072$ bit encrypted seed and receives the same.
Thus, the participant-participant communication time is 
$(4 \times 0.6992Mbit + 2 \times 0.002048Mbit + 2\times 0.003072Mbit)/8.66Mbps \approx  0.3241$ seconds.
By adding the latencies, we get a 
total communication time for MNIST-CNN of about 
$0.1036 + 2 \times 29/1000 + 0.3241 + 4 \times 29/1000 = 0.6018$ seconds.
The communication time of CIFAR10-VGG16 can be computed following the same steps, which result in a total 
communication time of about $287.4900$ seconds.
Therefore, the FFL communication time overheads over the standard FL for MNIST-CNN and CIFAR10-VGG16 
are $273.09\%$ and $312.59\%$, respectively.

\textbf{BREA communication time}. In BREA~\cite{so2020byzantine}, there are server-participant and participant-participant communications.
In the server-participant communication, each participant downloads a copy of the global model from the server.
Also, each participant uploads to the server a $100K$ floating-point vector (the locally computed Euclidean distances) and then uploads the aggregated shares. We consider the vector to be a $32$ bit floating-point.
Thus, the server-participant communication time for MNIST-CNN is $(0.6992Mbit)/31.01Mbps + (0.6992Mbit + 0.000032Mbit \times 100,000)/8.66Mbps \approx 0.4728$ seconds.
In the participant-participant communication, each participant sends $100K$ shares and receives $100K$ shares.
Thus, the participant-participant communication time is $(2 \times 0.6992Mbit \times 100,00)/8.66Mbps \approx 16,147.8060$ seconds.
By adding latencies, we get a total communication
for MNIST-CNN of about $0.4728 + 3 \times 29/1000 + 16,147.8060 + 200,000 \times 29/1000 = 21,948.3658$ seconds.
The communication time for CIFAR10-VGG16 can be computed following the same steps, which result in a total communication time of about $10,890,507.4916$ seconds.
Therefore, the BREA communication time overheads over standard FL for MNIST-CNN and CIFAR10-VGG16 are $13,607,070.37\%$ and $15,629,283.44\%$, respectively. 

These figures, summarized in Table~S.\ref{tab:communication_time}, show that the communication time imposed by FFL is significantly lower than that of the BREA, which provides similar security and privacy to FFL. 

Note that we do not include the time required to establish 
communications, which is expected to be small compared to the message-dependent times we consider.

\begin{table*}[t]
\centering
\caption{Comparison of communication time in seconds for one training round for the MNIST-CNN and CIFAR10-VGG16 benchmarks. 
S-P indicates the server-participant communication time. P-P indicates the participant-participant communication time.}
\label{tab:communication_time}
\resizebox{\textwidth}{!}{%
\begin{tabular}{|l|cccc|cccc|}
\hline
\multirow{2}{*}{\begin{tabular}[c]{@{}l@{}}Framework/\\ Benchmark\end{tabular}} & \multicolumn{4}{c|}{MNIST-CNN}                                                                                 & \multicolumn{4}{c|}{CIFAR10-VGG16}                                                                                      \\ \cline{2-9} 
                                                                               & \multicolumn{1}{c|}{S-P}    & \multicolumn{1}{c|}{P-P}         & \multicolumn{1}{c|}{Latency}    & Total time  & \multicolumn{1}{c|}{S-P}     & \multicolumn{1}{c|}{P-P}             & \multicolumn{1}{c|}{Latency}    & Total time      \\ \hline
FL                                                                              & \multicolumn{1}{c|}{0.1033} & \multicolumn{1}{c|}{0}           & \multicolumn{1}{c|}{0.0580}     & 0.1613      & \multicolumn{1}{c|}{69.6217} & \multicolumn{1}{c|}{0}               & \multicolumn{1}{c|}{0.0580}     & 69.6797         \\ \hline
FFL                                                                             & \multicolumn{1}{c|}{0.1036} & \multicolumn{1}{c|}{0.3241}      & \multicolumn{1}{c|}{0.1740}     & 0.6018      & \multicolumn{1}{c|}{69.6220} & \multicolumn{1}{c|}{217.6939}        & \multicolumn{1}{c|}{0.1740}     & 287.4900        \\ \hline
BREA                                                                            & \multicolumn{1}{c|}{0.4728} & \multicolumn{1}{c|}{16,147.8060} & \multicolumn{1}{c|}{5,800.0870} & 21,948.3658 & \multicolumn{1}{c|}{69.9912} & \multicolumn{1}{c|}{10,884,637.4134} & \multicolumn{1}{c|}{5,800.0870} & 10,890,507.4916 \\ \hline
\end{tabular}%
}
\end{table*}

\end{document}